\def\x{{\mathbf x}}
\def\X{{\mathbf X}}
\def\x{{\mathbf x}}
\def\a{\mathbf{a}}
\newtheorem{definition}{Definition}
\newtheorem{proposition}{Proposition}
\newtheorem{corollary}{Corollary}
\newtheorem{theorem}{Theorem}
\newtheorem{example}{Example}
\newtheorem{remark}{Remark}
\newcommand{\reff}[1]{(\ref{#1})}
\newcommand\as{\textsf{as}}
\renewcommand\Pr[1]{\text{Pr}\left[{#1}\right]}
 \newcommand\sgn[1]{\textsf{sgn}\left({#1}\right)}
\newcommand\vect[1]{\ensuremath{\mathbf{#1}}}
\begin{document}
\begin{flushleft}
{\Large
\textbf{Bounds on the Average Sensitivity of Nested Canalizing Functions}
}
% Insert Author names, affiliations and corresponding author email.
\\
Johannes Georg Klotz$^{1}$, 
Reinhard Heckel$^{2}$,
Steffen Schober$^{1}$
\\
\bf{1} Institute of Communications Engineering, Ulm University, Ulm, Germany\\
\bf{2} Department of Information Technology and Electrical Engineering, ETH Zurich, Switzerland
\\
$\ast$ E-mail: johannes.klotz@uni-ulm.de
\end{flushleft}

\section*{Abstract}
Nested canalizing Boolean (NCF) functions play an important role in biological motivated regulative networks and in signal processing, in particular describing stack filters. It has been conjectured that NCFs have a stabilizing effect on the network dynamics. It is well known that the average sensitivity plays a central role for the stability of (random) Boolean networks. Here we provide a tight upper bound on the average sensitivity for NCFs as a function of the number of relevant input variables. As conjectured in literature this bound is smaller than $\frac{4}{3}$. This shows that a large number of functions appearing in biological networks belong to a class that has very low average sensitivity, which is even close to a tight lower bound.

\section*{Introduction}
Boolean networks play an important role in modeling and understanding signal transduction and regulatory networks.
Boolean networks have been widely studied under different point of views, e.g. \cite{A03,SK04,B08}. 
One line of research focuses on the dynamical stability of randomly created networks.
For example,  random Boolean networks tend to be  unstable if 
the functions are chosen from the set of all Boolean functions with average number of variables (average \emph{in-degree}) larger than two \cite{K69}.
This can be attributed to the fact that the expected \emph{average sensitivity} of random functions, with in-degree $> 2$ larger than one. The expected average sensitivity is an appropriate measure for the stability of random Boolean networks \cite{L07,SB07}.

If only functions from certain classes are chosen, stable behavior can be achieved for higher in-degrees.
For instance, canalizing and nested canalizing functions,   
introduced in \cite{K93, KPS03}, have been conjectured \cite{KPST04} to have a stabilizing effect on network dynamics. In \cite{P10} it has been shown that Boolean networks can be stable even if the average in-degree is high.
Interestingly, studies of regulatory network models have shown that a large number of their functions are canalizing \cite{W42, HSW02,LLL04,DB08,KFS13,ML11}.
Canalizing functions are also important for the construction of stack filters used in signal processing \cite{GYC92}.

A Boolean function $f(x_1,\ldots,x_n)$ is \emph{canalizing} in variable $x_i$, if $f$ is constant when $x_i$ is set to its \textit{canalizing} value.
%In \cite{KFS} these functions have been investigated using Fourier analysis, and some spectral properties have been shown. 
%Further investigations have been undertaken, e.g., the calculation of the actual size of the class \cite{JSK04}.
Nested canalizing functions are canalizing functions, whose restriction to the non-canalizing value is again a canalizing function and so on 
(a precise definition is given later).
In this paper we analyze nested canalizing functions (NCFs) in particular their average sensitivities.
The notion of sensitivity was first introduced by Cook et al. \cite{CDR86}. It was applied later to Boolean functions \cite{BKS99} 
and can be viewed as a metric for the influence of a random permutation of the input variables on the output of the function. The average sensitivity was investigated  in \cite{S05} in the context of monotone Boolean functions. An upper bound for locally monotone functions was presented in \cite{KFS13}.
Here we give a tight upper bound on the average sensitivity of NCFs.
Our result shows that the average sensitivity of NCFs is always smaller than $\frac{4}{3}$ as conjectured in \cite{LAM13}.
We further provide a recursive expression of the average sensitivity and the zero Fourier coefficient of a NCF.
Finally we will discuss and compare our new bounds to bound in literature.

Our main tool is the Fourier analysis \cite{B61,FJS91} of Boolean functions, which is introduced in Section \emph{Notation, Basic Definitions and Fourier Analysis of Boolean Functions}, where we also address further concepts needed.
In Section \emph{Nested Canalizing Functions} spectral properties  of canalizing and NCFs are broached. 
Additionally we discuss functions, in which all variables are most dominant,
as they turn out to minimize the average sensitivity. In Section \emph{Average Sensitivity} the new bounds on the average sensitivity are presented based 
on a recursive expression of the average sensitivity of NCFs. 
We conclude with a discussion of the results and some final remarks.

\section*{Methods}
\subsection*{Notation, Basic Definitions and Fourier Analysis of Boolean Functions} 
A Boolean function (BF) $f \in \mathcal F_n = \{ f : \Omega^n  \rightarrow \Omega\}$ with $\Omega=\{-1,+1\}$ maps n-ary input tuples to a binary output. Note, that we choose the $\{-1,+1\}$-representation of the Boolean states instead of the $\{0,1\}$-representation, since it will turn out to be advantageous as it simplifies our calculations, especially in the Fourier domain. However, our results apply for all binary alphabets $\Omega$.

In general not all input variables have an impact on the output, i.e., are relevant.
\begin{definition} \cite{LAM13}
A variable $i$ is relevant to a BF $f$, if there exists an $\x \in \Omega^n$ such that
$$
f(\x) \neq f(\x \oplus e_i),
$$
where $x\oplus e_i$ is the vector obtained from $\x$ by flipping its $i$-th entry.\\
Further we define $rel(f) \subseteq [n] = \{1,2,\ldots,n\}$ as the set containing all relevant variables of $f$.
\end{definition}

\subsubsection*{Fourier Analysis of Boolean Functions}
In this section we recall basic concepts of Fourier analysis of BFs and some results from \cite{KFS13} concerning restrictions of BFs. Consider $\x=(x_1, x_2, \ldots, x_n )$ as an instance of a uniformly distributed random vector 
$\X=(X_1, X_2, \ldots, X_n ) \in \Omega^n$, i.e., the probability density function of $\X$ can be written as 
\begin{equation*}
\Pr{\X=\x} = \frac{1}{2^n}.
\end{equation*}
For $U \subseteq [n]$ we define the functions $\chi_U(\x)$ by
\begin{equation} \label{eq:chi}
\chi_U(\x)= 
\begin{cases}
\prod_{i \in U} x_i &\text{, if  $U \neq \emptyset$}\\
1 &\text{, if  $U = \emptyset$}
\end{cases}.
\end{equation}
Note that for $A \subset U$ and $\bar{A} = U \setminus A$ 
\begin{equation*}
\chi_U(\x)= \chi_A(\x)\cdot \chi_{\bar{A}}(\x),
\end{equation*}
which directly follows from the definition of  $\chi_U$ (Eq.~\reff{eq:chi}).

It is well known that any BF $f$ can be represented by its Fourier-expansion \cite{B61,FJS91},
\begin{equation} \label{eq:FC}
f(\x) = \sum_{U \subseteq [n]} \hat{f}(U)\cdot \chi_U(\x),
\end{equation}
where $\hat{f}(U)$ are the Fourier coefficients, given by
\begin{equation} \label{eq:fourier}
\hat{f}(U) = 2^{-n}\sum_{\x } f(\x)\cdot \chi_U(\x).
\end{equation}
\begin{example}
The following tables contain the truth-table representation, and the polynomial representation, i.e. Eq.~\reff{eq:FC}, of three well-known BFs, namely AND, OR, and XOR.
\begin{table}[!h]
\centering
\begin{tabular}{c c|c|c|c}
$x_1$ &$x_2$ & AND & OR & XOR \\
\hline
$-1$ &$-1$ & $+1$& $+1$&$ +1$  \\
$-1$ &$+1$ & $+1$& $-1$&$ -1$ \\
$+1$ &$-1$ & $+1$& $-1$&$ -1$ \\
$+1$ &$+1$ & $-1$& $-1$&$ +1$ \\
\end{tabular}
\caption{Truthtable representation}
\end{table}

\begin{table}[!h]
\centering
\begin{tabular}{c|c|c}
AND & OR & XOR\\ \hline
$\nicefrac{1}{2}+\nicefrac{1}{2}x_1+\nicefrac{1}{2}x_2-\nicefrac{1}{2}x_1x_2$ &$-\nicefrac{1}{2}+\nicefrac{1}{2}x_1+\nicefrac{1}{2}x_2+\nicefrac{1}{2}x_1x_2$  &$x_1x_2$
\end{tabular}
\caption{Polynomial representation (Eq.~\reff{eq:FC})}
\end{table}
\end{example}
\begin{remark}
The polynomial representation in the previous example is different to the one used in \cite{LAM13}, where the variables $x_i$ are defined over GF(2), i.e. $\Omega=\{0,1\}$, where addition ($\oplus$) and multiplication are defined modulo $2$. In this case, the AND function becomes $x_1x_1$, the OR function is given by $x_1\oplus x_2\oplus x_1x_2$ and XOR by $x_1\oplus x_2$.
\end{remark}
\subsubsection*{Restrictions of Boolean Functions}
We call a function $f^{(i,a_i)} \in \mathcal F_n $ a restriction of f, if it is obtained by setting the $i$-th input variable of $f$ to some constant $a_i \in \{-1,+1\}$. Every BF can be decomposed in two unique restricted functions for each relevant variable, as stated by the following proposition:

\begin{proposition}\label{prop:res1} 
    For any $f \in \mathcal{F}_n$ and each $i\in [n]$ 
    there exist unique functions  $f^{(i,+)}, f^{(i,-)} \in \mathcal{F}_n$, with $ i \notin rel(f^{(i,+)})$ and $i \notin rel(f^{(i,-)})$,  such that 
    $$
    f= g^{(i,+)} f^{(i,+)} + g^{(i,-)}f^{(i,-)},
    $$
    where the functions $g^{(i,+)}, g^{(i,-)} \in \mathcal F_n$ are given by 
    $$g^{(i,+)} (\vect x) = 
    \begin{cases}
 1 & \text{if}  ~ x_i  = 1\\
 0 & \text{else}
    \end{cases}
    \quad\text{and}\quad
    g^{(i,-)} (\vect x) = \begin{cases}
 1 & \text{if}  ~ x_i  = -1\\
 0 & \text{else}
    \end{cases}.
    $$
\end{proposition}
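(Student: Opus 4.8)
The plan is to separate the statement into an existence part and a uniqueness part; both are elementary once the definition of relevance is unpacked, and the whole argument is essentially the Boole--Shannon cofactor expansion phrased in the $\{-1,+1\}$ convention.

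For existence I would take $f^{(i,+)}$ and $f^{(i,-)}$ to be the two restrictions of $f$ already introduced, i.e. the functions obtained from $f$ by hard-wiring its $i$-th argument to $+1$ and to $-1$ respectively. By construction these functions ignore their $i$-th input, so $i \notin rel(f^{(i,+)})$ and $i \notin rel(f^{(i,-)})$. To verify the identity $f = g^{(i,+)} f^{(i,+)} + g^{(i,-)} f^{(i,-)}$ I would fix an arbitrary $\x \in \Omega^n$ and argue by cases on $x_i$: if $x_i = +1$ then $g^{(i,+)}(\x) = 1$ and $g^{(i,-)}(\x) = 0$, so the right-hand side collapses to $f^{(i,+)}(\x)$, which is exactly $f(\x)$ because $\x$ already has its $i$-th coordinate equal to $+1$; the case $x_i = -1$ is symmetric. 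Since $\x$ was arbitrary, the two functions agree on all of $\Omega^n$.

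For uniqueness, suppose $h_+, h_- \in \mathcal F_n$ with $i \notin rel(h_+)$ and $i \notin rel(h_-)$ also satisfy $f = g^{(i,+)} h_+ + g^{(i,-)} h_-$. Given any $\x$, let $\x'$ agree with $\x$ except that its $i$-th coordinate is forced to $+1$. Evaluating the decomposition at $\x'$ kills the $g^{(i,-)}$-term and gives $h_+(\x') = f(\x') = f^{(i,+)}(\x)$, while irrelevance of $i$ to $h_+$ yields $h_+(\x') = h_+(\x)$; hence $h_+ = f^{(i,+)}$ pointwise, and forcing the $i$-th coordinate to $-1$ instead gives $h_- = f^{(i,-)}$. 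I do not anticipate a genuine obstacle here: this is a normal-form fact, and the only point that deserves care is the equivalence between ``$i$ is not relevant to $h$'' and ``$h$ is invariant under flipping the $i$-th coordinate'', since it is precisely this invariance that pins any admissible $h_\pm$ down to the substitution functions. For completeness I would note that a Fourier-analytic derivation is also available: grouping the expansion \reff{eq:FC} of $f$ according to whether $i \in U$ and using $g^{(i,\pm)} = \tfrac{1}{2}(1 \pm x_i)$ exhibits $f^{(i,\pm)} = \sum_{U \not\ni i}(\hat f(U) \pm \hat f(U \cup \{i\}))\chi_U$, but the pointwise argument above is shorter and more transparent.
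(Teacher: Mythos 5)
Your proof is correct: the existence part is the standard Boole--Shannon expansion verified pointwise by cases on $x_i$, and the uniqueness part correctly uses the fact that ``$i \notin rel(h)$'' is exactly invariance of $h$ under flipping the $i$-th coordinate, which pins $h_\pm$ down to the substitution functions. The paper states this proposition without any proof, treating it as a standard fact, so there is nothing to compare against; your argument is the natural one and fills that gap adequately.
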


The Fourier coefficients of $f^{(i,+)}$ and $f^{(i,-)}$ can be obtained as stated in the following proposition.

\begin{proposition}\cite{KFS13} \label{prop:restriced}
Let $f$ be a BF in $n$ uniformly distributed variables. 
The Fourier coefficients of ${f}^{(i,a_i)}$ are given by
$$
    \hat{f}^{(i,a_i)}(U) = \hat{f}(U) +  a_i \cdot  \hat{f}(U \cup \{i\})
$$
where $U \subseteq [n]\setminus \{i\}$. 
\end{proposition}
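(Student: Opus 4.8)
The plan is to argue directly from the definition of the Fourier coefficients (Eq.~\reff{eq:fourier}), using that $f^{(i,a_i)}$ does not depend on $x_i$. Fix $U\subseteq[n]\setminus\{i\}$ and consider the candidate right-hand side $\hat f(U)+a_i\,\hat f(U\cup\{i\})$. Substituting Eq.~\reff{eq:fourier} for both coefficients, and observing that for $i\notin U$ the identity $\chi_{U\cup\{i\}}(\x)=x_i\,\chi_U(\x)$ holds (it follows from the factorization of $\chi$ noted after Eq.~\reff{eq:chi}), the two sums combine into
$$2^{-n}\sum_{\x} f(\x)\,\chi_U(\x)\,(1+a_i x_i).$$

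The key observation is that $1+a_i x_i$ equals $2$ when $x_i=a_i$ and $0$ otherwise, so the sum collapses onto the subcube $\{\x: x_i=a_i\}$. On that subcube $f$ agrees with $f^{(i,a_i)}$ by definition of the restriction, and $\chi_U$ is independent of the $i$-th coordinate. Summing over the remaining $n-1$ free coordinates and cancelling the factor $2$ against $2^{-n}$ produces precisely the Fourier-coefficient formula of $\hat f^{(i,a_i)}(U)$, which completes the proof.

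As a cross-check one can also route through Proposition~\ref{prop:res1}: writing $g^{(i,\pm)}(\x)=\tfrac{1\pm x_i}{2}$ gives $f(\x)=\tfrac{f^{(i,+)}+f^{(i,-)}}{2}+x_i\,\tfrac{f^{(i,+)}-f^{(i,-)}}{2}$, and comparing with the Fourier expansion of $f$ (Eq.~\reff{eq:FC}), split according to whether a set contains $i$, yields $\hat f(U)=\tfrac12\bigl(\hat f^{(i,+)}(U)+\hat f^{(i,-)}(U)\bigr)$ and $\hat f(U\cup\{i\})=\tfrac12\bigl(\hat f^{(i,+)}(U)-\hat f^{(i,-)}(U)\bigr)$ for $U\subseteq[n]\setminus\{i\}$; solving this $2\times2$ linear system recovers the claim (and, as a bonus, the explicit formulas $\hat f^{(i,\pm)}(U)=\hat f(U)\pm\hat f(U\cup\{i\})$). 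Either way the only thing requiring care is bookkeeping rather than a genuine obstacle: one must treat $f^{(i,a_i)}$ consistently as a function on $\Omega^n$ whose $i$-th variable is irrelevant, so that its Fourier support lies in subsets of $[n]\setminus\{i\}$ and the normalization ($2^{-n}$ versus $2^{-(n-1)}$) matches up. Once the reduction $1+a_ix_i\in\{0,2\}$ is made, the statement is essentially a one-line computation.
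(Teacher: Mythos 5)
Your proof is correct. Note that the paper itself offers no proof of this proposition --- it is imported from \cite{KFS13} --- so there is no in-paper argument to compare against; your write-up fills that gap. The central computation is right: for $a_i,x_i\in\{-1,+1\}$ one has $\chi_U(\x)+a_i\chi_{U\cup\{i\}}(\x)=\chi_U(\x)(1+a_ix_i)$, the factor $1+a_ix_i$ is the indicator $2\cdot\mathbf{1}\{x_i=a_i\}$, and on that subcube $f$ coincides with $f^{(i,a_i)}$ while $\chi_U$ ignores the $i$-th coordinate, so doubling the subcube sum recovers the full-cube average defining $\hat f^{(i,a_i)}(U)$. Your caveat about normalization is exactly the right one to flag, and it is resolved correctly because the paper's convention keeps $f^{(i,a_i)}\in\mathcal F_n$ with $i\notin rel(f^{(i,a_i)})$. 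The cross-check via $g^{(i,\pm)}(\x)=\tfrac{1\pm x_i}{2}$ is also sound and is essentially Proposition~\ref{prop:res2} read in the inverse direction, which makes the consistency of the two propositions explicit; either route alone would suffice.
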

The reverse relation, i.e., the composition of a BF by two restricted functions, is described in terms of Fourier coefficients by the following proposition.

\begin{proposition} \cite{KFS13}\label{prop:res2} 
The Fourier coefficients of a BF $f$ with uniform distributed input variables can be composed of the Fourier coefficients of its two restricted functions $f^{(i,-)}$ and $f^{(i,+)}$ according to
    \begin{equation*}
    \hat f(U) =  \frac{1}{2}\left(\hat f^{(i,+)}( U \setminus \{i\}) + (-1)^{| U \cap \{i\}|  } \hat f^{(i,-)}( U \setminus \{i\})  \right),
    \end{equation*}
    or
    \begin{equation*}
    \hat 2 f(U) = \begin{cases} 
 \hat f^{(i,+)}(U \setminus \{i\} ) + \hat f^{(i,-)}(U \setminus \{i\} ) &\text{if}~i \in U \\
 \hat f^{(i,+)}(U) - \hat f^{(i,-)}(U ) &\text{if}~i \notin U \\
    \end{cases}.
 \label{eq:prop}
    \end{equation*}
\end{proposition}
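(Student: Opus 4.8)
The plan is to read off the composition formula by inverting the linear relation of Proposition~\ref{prop:restriced}. Fix the coordinate $i$ and a set $U \subseteq [n]\setminus\{i\}$, and specialize Proposition~\ref{prop:restriced} to the two values $a_i = +1$ and $a_i = -1$. This gives the pair of scalar equations $\hat f^{(i,+)}(U) = \hat f(U) + \hat f(U\cup\{i\})$ and $\hat f^{(i,-)}(U) = \hat f(U) - \hat f(U\cup\{i\})$. Adding them isolates $2\hat f(U)$ and subtracting them isolates $2\hat f(U\cup\{i\})$; this pair of identities already contains all the information in the claim.

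What remains is organizational. For an arbitrary $V\subseteq[n]$ I would split into the two cases $i\notin V$ and $i\in V$: in the first I apply the ``sum'' identity with $U=V$, in the second the ``difference'' identity with $U=V\setminus\{i\}$. In both cases the argument handed to $\hat f^{(i,\pm)}$ is $V\setminus\{i\}$, and the coefficient of $\hat f^{(i,-)}$ is $+1$ in the first case and $-1$ in the second, i.e. $(-1)^{|V\cap\{i\}|}$. Collapsing the case split into this single sign yields the first displayed equation, and keeping the two cases separate (and clearing the factor of two) yields the second.

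A self-contained alternative, avoiding Proposition~\ref{prop:restriced} altogether, starts from the decomposition $f = g^{(i,+)}f^{(i,+)} + g^{(i,-)}f^{(i,-)}$ of Proposition~\ref{prop:res1}. In the $\{-1,+1\}$ encoding the selectors are $g^{(i,\pm)}(\x) = \tfrac12(1\pm x_i)$, so $f(\x) = \tfrac12\bigl(f^{(i,+)}(\x)+f^{(i,-)}(\x)\bigr) + \tfrac{x_i}{2}\bigl(f^{(i,+)}(\x)-f^{(i,-)}(\x)\bigr)$. Since neither restriction depends on $x_i$, I would expand each in the basis $\{\chi_U : U\subseteq[n]\setminus\{i\}\}$, use $x_i\,\chi_U(\x)=\chi_{U\cup\{i\}}(\x)$ for $i\notin U$ to put $f$ in the form $\sum_V c_V\chi_V(\x)$, and then invoke uniqueness of the Fourier expansion (Eq.~\reff{eq:FC}) to identify $\hat f(V)=c_V$. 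I do not expect any genuine obstacle: the statement is an elementary consequence of linearity of the Fourier transform and of the fact that a restriction removes the dependence on $x_i$. The only point requiring a little care is the index bookkeeping in the case split $i\in V$ versus $i\notin V$ and verifying that the compact sign $(-1)^{|V\cap\{i\}|}$ reproduces both cases correctly.
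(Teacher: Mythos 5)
The paper does not actually prove this proposition --- it is quoted from \cite{KFS13} without proof --- so there is no in-paper argument to compare against. Your proposal is correct and is the natural derivation: specializing Proposition~\ref{prop:restriced} to $a_i=\pm1$ gives $\hat f^{(i,\pm)}(U)=\hat f(U)\pm\hat f(U\cup\{i\})$ for $U\subseteq[n]\setminus\{i\}$, and adding/subtracting inverts the $2\times 2$ linear system; since every $V\subseteq[n]$ is either such a $U$ or such a $U\cup\{i\}$, this covers all coefficients. Your self-contained alternative via $g^{(i,\pm)}=\tfrac12(1\pm x_i)$ and uniqueness of the Fourier expansion is equally valid and has the virtue of not leaning on another cited-but-unproved proposition. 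One caveat: your derivation yields $2\hat f(U)=\hat f^{(i,+)}(U)+\hat f^{(i,-)}(U)$ when $i\notin U$ and $2\hat f(U)=\hat f^{(i,+)}(U\setminus\{i\})-\hat f^{(i,-)}(U\setminus\{i\})$ when $i\in U$, which agrees with the first displayed formula (sign $(-1)^{|U\cap\{i\}|}$) but is the \emph{opposite} case assignment from the second displayed formula as printed in the paper. A quick check with $f(\x)=x_1$ (where $\hat f(\{1\})=1$, $\hat f^{(1,\pm)}(\emptyset)=\pm1$) confirms that your version is the correct one and that the paper's case-split display has its two branches interchanged (and ``$\hat 2 f$'' is a typo for ``$2\hat f$''). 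So your closing claim that keeping the cases separate ``yields the second'' display should be read as yielding the corrected version of it; otherwise the argument is complete.
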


A immediate corollary of Proposition \ref{prop:res2} shows that the zero coefficient of a function can be constructed by the zero coefficients of the restricted functions:
\begin{corollary} \label{cor:zero}
The zero Fourier coefficient of any Boolean function $f$ can be written as:
    \begin{equation}
    \hat f(\emptyset) =  \frac{1}{2}\hat f^{(i,+)}(\emptyset)   +   \frac{1}{2}\hat f^{(i,-)}( \emptyset),
    \end{equation}
    where $i \in [n]$ is the index of some variable.
\end{corollary}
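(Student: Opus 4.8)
The plan is to read the identity off directly from Proposition \ref{prop:res2}. That proposition gives, for every subset $U \subseteq [n]$ and every index $i$, the decomposition $\hat f(U) = \tfrac{1}{2}\big(\hat f^{(i,+)}(U\setminus\{i\}) + (-1)^{|U\cap\{i\}|}\hat f^{(i,-)}(U\setminus\{i\})\big)$. First I would specialize this to $U = \emptyset$. Then $U\setminus\{i\} = \emptyset$ for any choice of $i$, and $U\cap\{i\} = \emptyset$, so the sign factor $(-1)^{|U\cap\{i\}|}$ is just $1$. Substituting these into the formula yields precisely $\hat f(\emptyset) = \tfrac{1}{2}\hat f^{(i,+)}(\emptyset) + \tfrac{1}{2}\hat f^{(i,-)}(\emptyset)$. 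Since the index $i$ played no special role, the conclusion holds for every $i \in [n]$, which is exactly the statement of the corollary.

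As a sanity check — and as an alternative, self-contained derivation — one can instead start from the definition \reff{eq:fourier}, which gives $\hat f(\emptyset) = 2^{-n}\sum_{\x\in\Omega^n} f(\x)$. Splitting the sum according to the value of the $i$-th coordinate and using Proposition \ref{prop:res1} (so that $f$ agrees with $f^{(i,+)}$ on the half $\{x_i = +1\}$ and with $f^{(i,-)}$ on the half $\{x_i = -1\}$, each half containing $2^{n-1}$ points), one gets $\hat f(\emptyset) = 2^{-n}\big(\sum_{x_i = +1} f^{(i,+)}(\x) + \sum_{x_i = -1} f^{(i,-)}(\x)\big) = \tfrac{1}{2}\hat f^{(i,+)}(\emptyset) + \tfrac{1}{2}\hat f^{(i,-)}(\emptyset)$, where the last step again uses \reff{eq:fourier} applied to each restricted function together with the fact that $f^{(i,\pm)}$ does not depend on $x_i$.

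There is essentially no obstacle to overcome here; the only thing worth verifying is that the $i$-dependent sign in Proposition \ref{prop:res2} genuinely disappears at $U = \emptyset$, which it does because $\emptyset$ contains no index — in particular not $i$ — and that $\hat f^{(i,\pm)}(\emptyset)$ is the well-defined zero coefficient of the corresponding restriction. Both points are immediate, so the corollary follows at once, as the paper claims.
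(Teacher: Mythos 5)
Your proposal is correct and matches the paper's approach exactly: the paper presents this as an immediate consequence of Proposition \ref{prop:res2}, which is precisely your main argument of specializing to $U=\emptyset$ so that the sign factor vanishes. The additional first-principles derivation via Eq.~\reff{eq:fourier} is a fine sanity check but not a different route in any substantive sense.
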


If we restrict a function to more than one variable, namely to a set of variables $K$, we denote the restricted function with $f^{(K,\a)}$, where $\a$ is a vector containing the values to which the functions is restricted. The Fourier coefficients of $f^{(K,\a)}$ are given by the following proposition:
\begin{proposition}\cite{KFS13}\label{[prop:gen_restriced}
Let $f$ be a Boolean function and $\hat{f}(U)$ its Fourier coefficients. 
Furthermore, let $K$ be a set containing the indices $i$ of the input variables $x_i$, 
which are fixed to certain values $a_i$. The Fourier coefficients of the restricted function $f^{(K,\a)}$ are then given as:
\begin{equation}\nonumber
\hat{f}^{(K,\a)}(U) = \sum_{S \subseteq K} \left( \Phi_{S}(\a) \cdot  \hat{f}(U \cup S) \right),
\end{equation}
where $\a$ is the vector with entries $a_i, i \in K$.
\end{proposition}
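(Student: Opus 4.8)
The plan is to read the formula off directly from the uniqueness of the Fourier expansion, so that no recursion is actually needed; I will also indicate the inductive route since it reuses the restriction machinery already developed. Throughout, $\Phi_S(\a)$ is to be understood as $\chi_S(\a)=\prod_{i\in S}a_i$ (with $\Phi_\emptyset(\a)=1$), and $U$ ranges over subsets of $[n]\setminus K$. For $\x\in\Omega^n$ let $\x'$ be the vector agreeing with $\x$ on $[n]\setminus K$ and with $x'_i=a_i$ for $i\in K$; by definition of the restriction, $f^{(K,\a)}(\x)=f(\x')$. Expand $f$ by \reff{eq:FC} and split each frequency set $W\subseteq[n]$ uniquely as $W=U\cup S$ with $U=W\setminus K\subseteq[n]\setminus K$ and $S=W\cap K\subseteq K$. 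Using the multiplicativity of $\chi$ noted after \reff{eq:chi} together with the way $\x'$ is built, $\chi_W(\x')=\chi_U(\x)\,\chi_S(\a)$; and $W\mapsto(U,S)$ is a bijection onto $\{(U,S):U\subseteq[n]\setminus K,\ S\subseteq K\}$, so
$$f^{(K,\a)}(\x)=\sum_{U\subseteq[n]\setminus K}\Big(\sum_{S\subseteq K}\hat f(U\cup S)\,\chi_S(\a)\Big)\chi_U(\x)$$
for every $\x$. The right-hand side exhibits $f^{(K,\a)}$ as a multilinear polynomial in the coordinates outside $K$, so by uniqueness of the Fourier expansion its $\chi_U$-coefficient equals $\hat f^{(K,\a)}(U)$, which is exactly the asserted identity.

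An alternative that stays inside the paper's framework is induction on $|K|$ via Proposition \ref{prop:restriced}. The cases $|K|=0,1$ are immediate (the latter is Proposition \ref{prop:restriced}). For the step, pick $j\in K$, set $K'=K\setminus\{j\}$, and apply Proposition \ref{prop:restriced} to $g=f^{(K',\a)}$ in the variable $j$ to get $\hat f^{(K,\a)}(U)=\hat g(U)+a_j\,\hat g(U\cup\{j\})$; substituting the induction hypothesis for $\hat g$ gives
$$\hat f^{(K,\a)}(U)=\sum_{S\subseteq K'}\Phi_S(\a)\,\hat f(U\cup S)\;+\;a_j\sum_{S\subseteq K'}\Phi_S(\a)\,\hat f\big(U\cup S\cup\{j\}\big).$$
Since $a_j\Phi_S(\a)=\Phi_{S\cup\{j\}}(\a)$ when $j\notin S$, and since every $T\subseteq K$ is either contained in $K'$ or of the form $S\cup\{j\}$ with $S\subseteq K'$, the two sums merge into $\sum_{T\subseteq K}\Phi_T(\a)\,\hat f(U\cup T)$, as required.

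Neither route has a real obstacle; the only care needed is combinatorial bookkeeping. In the direct proof one must check that $W\mapsto(W\setminus K,\,W\cap K)$ is a bijection, so that every frequency set of $f$ contributes exactly once; in the inductive proof one must note that restrictions of distinct coordinates are commuting operations, so that $f^{(K,\a)}$ — assembled one variable at a time via Proposition \ref{prop:res1} — does not depend on the order, which is clear at the level of truth tables. It is also worth remarking that the stated formula is naturally indexed by $U\subseteq[n]\setminus K$, consistent with $\hat f^{(K,\a)}(W)=0$ whenever $W\cap K\neq\emptyset$, since the restricted function does not depend on the coordinates in $K$.
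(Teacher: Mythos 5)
Your proof is correct. Note that the paper itself gives no proof of this proposition --- it is quoted from \cite{KFS13} --- so there is no internal argument to compare against; both of your routes are sound and standard. The direct route (substitute $\x'$ into the Fourier expansion \reff{eq:FC}, split each frequency set $W$ uniquely as $(W\setminus K,\,W\cap K)$, use multiplicativity of $\chi$, and invoke uniqueness of the multilinear expansion) is the cleaner one and immediately explains why the coefficient is supported on $U\subseteq[n]\setminus K$; the inductive route has the advantage of staying entirely inside the machinery the paper actually states, reducing everything to Proposition \ref{prop:restriced}, and your remark that single-coordinate restrictions commute (clear on truth tables) is exactly the point one must check there. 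One caveat worth flagging: the paper never defines $\Phi_S(\a)$, and your identification $\Phi_S(\a)=\chi_S(\a)=\prod_{i\in S}a_i$ is the right one --- it is forced by consistency with the $|K|=1$ case in Proposition \ref{prop:restriced} and with the formula $\hat{f}^{(U,\bar\alpha)}(T)=\sum_{S\subseteq U}\chi_S(\bar\alpha)\hat f(T\cup S)$ used later in the proof of Proposition \ref{prop:most} --- but it deserves the explicit sentence you gave it.
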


\subsection*{Nested Canalizing Functions}
In order to define nested canalizing functions (NCF) we first need the following definition:
\begin{definition}\label{def:canalizing}
A BF $f$ is called  $<i:a:b>$ canalizing if there exists a canalizing variable $x_i$ and a constant $a \in \{-1,+1\}$
such that
\begin{equation*}
f(\x|_{x_i=a}) = f^{(i,a)}(\x)=  b,
\end{equation*}
for all $x_1, ... x_{i-1}, x_{i+1} .... x_n$, where $b \in \{-1,+1\}$ is a constant.
\end{definition}
Hence, $f$ is canalizing in variable $i$ if the decomposition according to Proposition \ref{prop:res1} results in either $f^{(i,+)}$ or $f^{(i,-)}$ being a constant function.

As shown in \cite{KFS13} the Fourier coefficients of a canalizing function satisfy
\begin{equation}\label{eq:cond:1}
\hat{f}(\emptyset) +  a_i  \cdot  \hat{f}(\{i\}) = b_i.
\end{equation}

A NCF can be described recursively as a canalizing function, whose restriction is again a NCF or more formally:

%\begin{definition}\label{def:nested_canalizing}
%A BF $f$ with $k$ relevant variables is called $\{\pi :\alpha :\beta\}$ nested canalizing for the variable order $x_{\pi(1)},\ldots, x_{\pi(k)}$, if $f$ is $<\pi(1):\alpha_1:\beta_1>$ canalizing and the restricted function, which is obtained by setting $x_{\pi(1)}=\bar{a}_i$, is again nested canalizing.
%If $f$ is nested canalizing, the decomposition according to Proposition \ref{prop:res1} results in either $f^{(i,+)}$ or $f^{(i,-)}$ being a constant function and the other function is again a nested canalizing function.
%\end{definition}
%
%Another definition.
\begin{definition}\label{def:nested_canalizing2}
For $k=1$ and $k=0$ any BF with $k\leq n$ relevant variables is a NCF.
For $k>1$ a BF is a NCF if there exists at least one variable $i$ and constants $\alpha_i, \beta_i \in \{+1,-1\}$ such that $f^{(i,\alpha_i)}=\beta_i$  and  $f^{(i,-\alpha_i)}$ is a NCF with $k-1$ relevant variables.
\end{definition}

Let $x_{\pi(1)},\ldots, x_{\pi(k)}$ be the variable order for which a NCFs fulfills the properties from Definition \ref{def:nested_canalizing2}, then we call, following \cite{LAM13}, such a function $\{\pi :\alpha :\beta\}$ nested canalizing.

As shown in \cite{KFS13} the spectral properties of NCFs are given as:\\
$f$ is $\{\pi :\alpha :\beta\}$ nested canalizing, if for all $j \in \{1,\ldots,k\}$
\begin{equation}\nonumber
\sum_{S \subseteq [j]} \left(\alpha_j^{|S \cap \{j\}|} \cdot \chi_{S \setminus \{j\}}(\bar \alpha) \cdot  \hat{f}(\tilde{S}) \right)=\beta_j,
\end{equation}
where $\bar \alpha$ is a vector containing all negated $\alpha_i$, i.e. $\bar \alpha_i = - \alpha_i$ and $
\tilde{S}$ is a set which is retrieved by applying the permutation $\pi$ to the elements of $S$.

For illustration, consider the following example:
\begin{example}
Let $f$ be $\{\pi: \alpha: \beta\}$ NCF with $k=2$ relevant variables and $\pi$ such that $\tilde{S} = \pi(S) = S$, then
\begin{align*}
\beta_1 &= \hat f(\emptyset) + \alpha_1 \hat f(\{1\}) \\
\beta_2 &= \hat f(\emptyset) - \alpha_1 \hat f(\{1\}) + \alpha_2 \hat f(\{2\}) - \alpha_1 \alpha_2 \hat f({1,2}).
\end{align*}
\end{example}
\subsubsection*{Properties of Nested Canalizing Functions}
In this section we state some properties of NCF.
First we address most dominant variables, which are defined as follows:
\begin{definition}
\cite[Def 4.5]{LAM13} Variable $i$ is called a most dominant variable of $f$ if there exists at least one variable order starting with this variable, i.e., there exists a permutation $\pi$ such that $\pi(1)=i$, for which $f$ is $\{\pi :\alpha :\beta\}$ nested canalizing.
\end{definition}
The set of most dominant variables has an impact on a number of Fourier coefficients, which is summarized in the following proposition.
\begin{proposition}\label{prop:most}
Let $K$ be the set of most dominant variables of a $\{\pi :\alpha :\beta\}$ NCF $f$. Then the absolute values of the corresponding Fourier coefficients are all equal , i.e., $\forall U \subseteq K, U \neq \{\emptyset\}$
\begin{equation*}
\left|\hat{f}(U) \right| = c, \hspace{0.8cm} c > 0,
\end{equation*}
or, more general,
\begin{equation}\label{eq:prop:most:2}
\alpha_j \cdot \chi_{U \setminus \{j\}}(\bar{\alpha}) \cdot \hat{f}(U) = c \hspace{1cm} \forall U\subseteq K, U \neq \{\emptyset\} \text{ and } \forall j \in K.
\end{equation}
Furthermore
\begin{equation*}
|\hat{f}(\emptyset)| =1-c,
\end{equation*}
and
\begin{equation*}
\beta_i = \sgn{\hat{f}(\emptyset)} \forall i \in K.
\end{equation*}

\end{proposition}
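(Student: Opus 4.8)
The plan is to proceed by induction on the number of relevant variables $k$, exploiting the recursive structure of NCFs from Definition~\ref{def:nested_canalizing2} together with the restriction formulas from Proposition~\ref{prop:restriced} and Corollary~\ref{cor:zero}. Fix a most dominant variable $j \in K$. By definition there is a variable order $\pi$ with $\pi(1)=j$ for which $f$ is $\{\pi:\alpha:\beta\}$ nested canalizing; hence $f^{(j,\alpha_j)} = \beta_j$ is constant and $g := f^{(j,-\alpha_j)}$ is a NCF in $k-1$ relevant variables whose most dominant set contains $K\setminus\{j\}$. Combining $\hat f^{(j,\alpha_j)}(U) = \hat f(U) + \alpha_j \hat f(U\cup\{j\})$ with the fact that a constant function has all nonzero-order coefficients vanishing gives, for every $U \subseteq [n]\setminus\{j\}$ with $U \neq \emptyset$,
\begin{equation*}
\hat f(U) + \alpha_j \hat f(U\cup\{j\}) = 0,
\qquad\text{and}\qquad
\hat f(\emptyset) + \alpha_j \hat f(\{j\}) = \beta_j .
\end{equation*}
The first identity already forces $|\hat f(U)| = |\hat f(U\cup\{j\})|$ for all such $U$, so it suffices to pin down the common value on subsets of $K$ \emph{not} containing $j$, which is exactly where the induction hypothesis applied to $g = f^{(j,-\alpha_j)}$ enters.

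Next I would translate the claim into the ``signed'' normalization \reff{eq:prop:most:2}, which is the real object of the induction: set $\phi(U) := \alpha_{j'}\,\chi_{U\setminus\{j'\}}(\bar\alpha)\,\hat f(U)$ and show $\phi(U) = c$ for all nonempty $U\subseteq K$ and all $j'\in K$ (independence of the choice of $j'\in K$ is part of the statement and follows because any $j'\in K$ can be taken as $\pi(1)$). Using Proposition~\ref{prop:res2} to express $\hat f(U)$ and $\hat f(U\cup\{j\})$ through $\hat g$ and $\hat f^{(j,\alpha_j)} = $ const, together with the relation $\chi_{S}(\bar\alpha)$ picks up a sign $-\alpha_j$ or $+1$ depending on whether $j\in S$, the signed coefficients $\phi$ of $f$ on subsets of $K$ reduce cleanly to the signed coefficients of $g$ on subsets of $K\setminus\{j\}$, up to the factor $\tfrac12$ coming from $\hat 2 f$. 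The induction hypothesis says those are all equal to some $c' > 0$; carrying the bookkeeping through yields $\phi(U) = c$ for a single constant $c>0$ across all of $K$, and in particular $|\hat f(U)| = c$ for $U\neq\emptyset$.

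For the statement about $\hat f(\emptyset)$, I would again use Corollary~\ref{cor:zero} with $i=j$: $\hat f(\emptyset) = \tfrac12 \hat f^{(j,\alpha_j)}(\emptyset) + \tfrac12 \hat f^{(j,-\alpha_j)}(\emptyset) = \tfrac12\beta_j + \tfrac12\hat g(\emptyset)$. By the inductive hypothesis $|\hat g(\emptyset)| = 1 - c'$ and $\mathrm{sgn}(\hat g(\emptyset)) = \beta_{j'}$ for $j'\in K\setminus\{j\}$; one then checks, using the relation $\beta_j + \alpha_j\hat f(\{j\}) \cdot(\text{sign bookkeeping})$ and the base case $k=1$ (where $f = \beta_1 \cdot g^{(1,\cdot)}$-type and $|\hat f(\emptyset)| = |\hat f(\{1\})| = \tfrac12$, i.e. $c=\tfrac12$), that all the $\beta_{j'}$, $j'\in K$, share the common sign $\mathrm{sgn}(\hat f(\emptyset))$ and that the magnitudes combine to give exactly $|\hat f(\emptyset)| = 1-c$. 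The base cases $k=0$ (constant, $K=\emptyset$, statement vacuous) and $k=1$ anchor the recursion.

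The main obstacle I anticipate is the sign bookkeeping in the second paragraph: verifying that the single constant $c$ is genuinely \emph{shared} by every most dominant variable $j'\in K$ — not just that $|\hat f(U)|$ is constant for each fixed reference variable — requires showing the signed quantity $\phi(U)$ in \reff{eq:prop:most:2} is independent of the choice of $j'$, and this interplay between different admissible permutations $\pi$ (each starting at a different element of $K$) and the $\chi_{U\setminus\{j'\}}(\bar\alpha)$ factors is the delicate point. Everything else is a routine, if slightly tedious, propagation of the restriction identities through the induction.
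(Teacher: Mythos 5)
Your route is genuinely different from the paper's: you induct on the number of relevant variables $k$ by peeling off one most dominant variable $j$ and passing to $g=f^{(j,-\alpha_j)}$, whereas the paper inducts on the order $|U|$ of the coefficient. The paper's inductive step uses that the restriction of $f$ to the non-canalizing values $\bar\alpha$ on a subset $U\subseteq K$ is still canalizing in every remaining $k\in K$, expands that canalizing condition through the multi-variable restriction formula, and cancels all lower-order terms by the induction hypothesis, leaving exactly $a_k\,\chi_{U}(\bar\alpha)\,\hat f(U\cup\{k\})=c$. A practical advantage of that scheme is that the reference variable $k\in K$ is arbitrary from the start, so the independence of $j$ in \reff{eq:prop:most:2} comes for free, which is precisely the point you identify as delicate in your own plan.

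Two steps in your proposal are genuine gaps rather than bookkeeping. First, you assert without argument that the most dominant set of $g=f^{(j,-\alpha_j)}$ contains $K\setminus\{j\}$; this requires knowing that subfunctions of NCFs are again NCFs and that a canalizing variable of an NCF can always be placed first in some admissible order — facts not established anywhere in the paper's toolbox. Second, and more seriously, peeling $j$ controls only the coefficients $\hat f(U)$ and $\hat f(U\cup\{j\})$ for \emph{nonempty} $U\subseteq K\setminus\{j\}$ (via $\hat f(U)+\alpha_j\hat f(U\cup\{j\})=0$ and $|\hat f(U)|=\tfrac12|\hat g(U)|$); the coefficient $\hat f(\{j\})$ of the peeled variable itself is left undetermined, since the $U=\emptyset$ case only gives $|\hat f(\{j\})|=|\beta_j-\hat f(\emptyset)|=1\pm|\hat f(\emptyset)|$, and ruling out the $+$ sign is equivalent to the claim $\beta_j=\sgn{\hat f(\emptyset)}$ that you are trying to prove. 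Closing this requires an extra ingredient — either Parseval's identity (which is what implicitly rescues the paper's own base case $|U|\le 1$) or a cross-reference between peelings from two distinct elements of $K$ to force the two candidate values of $|\hat f(\{j\})|$ to agree with $c$. You flag this as ``the delicate point'' but do not supply the missing argument, so the inductive step as written does not close; with those two additions the approach would work.
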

\begin{proof}

The proof for the zero and first order coefficients, i.e. $|U| = 1$ and $U=\emptyset$, follows directly from Eq. \reff{eq:cond:1}. We can hence use Eq. \reff{eq:prop:most:2} as an induction hypothesis for coefficients with order smaller than $|U|$. We show next that as a result this is also valid for coefficients with order $|U|+1$.
%Lets now consider coefficients $\tilde{U}=U \cup \{k\},k \in K, k \notin U$  with order assuming Eq. \reff{eq:prop:most:2} is correct for coefficients with :

Using
\begin{equation}\nonumber
\hat{f}^{(U,\bar \alpha)}(T) = \sum_{S \subseteq U} \left(\chi_{S}(\bar{\alpha} )\cdot  \hat{f}(T \cup S) \right),
\end{equation}
and that $f$ is canalizing in any variable $k$, it follows that every restriction of $f$ must also be canalizing in variable $k$, i.e. $\hat{f}^{(U)}(\emptyset) +  a_k\cdot \hat{f}^{(U)}(\{k\}) = b$, we get:
\begin{align}\nonumber
b &= \sum_{S \subseteq U} \left( \chi_{S}(\bar{\alpha} ) \cdot  \hat{f}(S) \right) + a_k\cdot \sum_{S \subseteq U} \left( \chi_{S}(\bar{\alpha} )\cdot  \hat{f}(\{k\}\cup S) \right), \\\nonumber
b  &= \sum_{S \subseteq U, S \neq \emptyset} \left(-a_j\cdot \chi_{S\setminus\{j\}}(\bar{\alpha} ) \cdot  \hat{f}(S) \right) + \hat{f}(\emptyset) \\ \nonumber 
&+ a_k\cdot \sum_{S \subseteq U, S \neq \emptyset} \left(\chi_{S}(\bar{\alpha} ) \cdot  \hat{f}(\{k\}\cup S) \right)+ a_k\cdot \hat{f}(\{k\}).
\end{align}
Using Eq. \reff{eq:cond:1} and the induction hypothesis, we get
\begin{align}\nonumber
0 &=\sum_{S \subseteq U, S \neq \emptyset} \left( - c \right)  +  \sum_{S \subset U, S \neq \emptyset} \left( a_k \cdot \chi_{S}(\bar{\alpha} ) \cdot  \hat{f}(\{k\}\cup S) \right)  \\ \nonumber
&+\left( a_k \cdot \chi_{U}(\bar{\alpha} ) \cdot  \hat{f}(\{k\}\cup U) \right).
\end{align}
We again assume \reff{eq:prop:most:2} holds for all $S \subset U$, i.e. $|S| < |U|$ and, hence:
\begin{align}\nonumber
0 &= - (2^{|U|}-1)\cdot c  + (2^{|U|}-2)\cdot c + \left( a_k \cdot \chi_{U}(\bar{\alpha} ) \cdot  \hat{f}(\{k\}\cup U) \right)\\\nonumber
c &= a_k \cdot \chi_{U}(\bar{\alpha} ) \cdot  \hat{f}(\{k\}\cup U),
\end{align}
which concludes the proof.
\end{proof}

For the special case, in which all variables are most dominant, we derive the following corollaries:
\begin{corollary}\label{cor:most:a}
Let $f$ be a $\{\pi:\alpha:\beta\}$ NCF with $n$ variables of which $k$ are relevant. All variables are most canalizing if the Fourier coefficients fulfill the following condition,
\begin{align} \label{eq:corr:most:a:1}
\alpha_j \cdot \left(\prod_{i \in S,i\neq j}{\bar{\alpha}_i} \right)\hat{f}(S) = c \hspace{1cm} \forall S \subseteq [n], S \neq \{\emptyset\} \text{ and } \forall j \in S
\end{align}
with
\begin{align} \label{eq:corr:most:a:c}
c= 2^{-(k-1)}.
\end{align}
\end{corollary}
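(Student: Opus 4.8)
The plan is to induct on the number $k$ of relevant variables, and in fact to prove a slightly stronger statement that drops the standing hypothesis that $f$ is already nested canalizing: I will show that any $\{-1,+1\}$-valued $f$ whose Fourier coefficients satisfy \reff{eq:corr:most:a:1} with $c=2^{-(k-1)}$ is necessarily an NCF and has all of its relevant variables most dominant. This strengthening is essential, because the inductive step passes to a restriction of $f$ and I must be able to \emph{conclude} that this restriction is an NCF rather than assume it. I may discard irrelevant variables at the outset: if $S$ meets an index outside $rel(f)$ then $\hat f(S)=0\neq c$, so \reff{eq:corr:most:a:1} is really a constraint on the $2^{k}-1$ nonempty subsets of $rel(f)$, and ``all variables are most dominant'' means all $k$ relevant ones.

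First I would solve \reff{eq:corr:most:a:1} for the coefficients. Since $\alpha_i,\bar\alpha_i=-\alpha_i\in\{-1,+1\}$, it yields $\hat f(S)=c\,(-1)^{|S|-1}\prod_{i\in S}\alpha_i$ for every nonempty $S\subseteq rel(f)$, and in particular the recursion $\hat f(U\cup\{j\})=-\alpha_j\,\hat f(U)$ whenever $U\subseteq rel(f)\setminus\{j\}$ is nonempty. Feeding this into Proposition \ref{prop:restriced} gives the compact identity $\hat{f}^{(j,a_j)}(U)=(1-a_j\alpha_j)\,\hat f(U)$ on all such $U$. Taking $a_j=\alpha_j$ annihilates the factor, so every nonempty coefficient of $f^{(j,\alpha_j)}$ vanishes and $f^{(j,\alpha_j)}$ is the constant $\hat f(\emptyset)+\alpha_j\hat f(\{j\})=\hat f(\emptyset)+c$; thus $f$ is canalizing in every relevant variable $j$. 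To fix $\hat f(\emptyset)$ I would invoke Parseval, $\hat f(\emptyset)^2+(2^k-1)c^2=1$, which for $c=2^{-(k-1)}$ reduces exactly to $\hat f(\emptyset)^2=(1-c)^2$; of the two roots $\pm(1-c)$ only $1-c$ makes the constant value $\hat f(\emptyset)+c$ of $f^{(j,\alpha_j)}$ land in $\{-1,+1\}$. Hence $\hat f(\emptyset)=1-c$ and the canalized value is $+1$ at every variable.

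Next, fix a relevant $j$ and consider the complementary restriction $g:=f^{(j,-\alpha_j)}$. The identity above with $a_j=-\alpha_j$ gives $\hat g(U)=(1+\alpha_j^2)\,\hat f(U)=2\,\hat f(U)$ for every $U\subseteq rel(f)\setminus\{j\}$, so $g$ is a $\{-1,+1\}$-valued function with exactly the $k-1$ relevant variables $rel(f)\setminus\{j\}$, and its coefficients satisfy \reff{eq:corr:most:a:1} with the doubled constant $c'=2c=2^{-((k-1)-1)}$ and the same signs. The strengthened induction hypothesis therefore applies to $g$ and shows that $g$ is an NCF. Since $f^{(j,\alpha_j)}\equiv+1$ is constant while $f^{(j,-\alpha_j)}=g$ is an NCF with $k-1$ relevant variables, Definition \ref{def:nested_canalizing2} exhibits $f$ as an NCF with a canalizing order beginning at $j$, i.e. $j$ is most dominant. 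As $j$ was arbitrary, all relevant variables are most dominant; the base cases $k\le1$ (where $f=\pm x_j$ or is constant) are immediate.

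The step I expect to be the main obstacle is exactly the one the strengthening is designed to remove: to run the induction on the restriction $g$ I need $g$ to be nested canalizing, yet NCF-ness is part of the conclusion and cannot be read off from $f$ alone, since Definition \ref{def:nested_canalizing2} only guarantees that \emph{some} canalizing restriction of an NCF is again an NCF, not the restriction at an arbitrary canalizing variable. Folding the implication that a $\{-1,+1\}$-valued function satisfying \reff{eq:corr:most:a:1} must be an NCF into the inductive statement closes this gap. A secondary delicate point is the sign of $\hat f(\emptyset)$: the Fourier condition by itself only gives $\hat f(\emptyset)=\pm(1-c)$, and it is the interaction of Parseval with the requirement that the restrictions be $\{-1,+1\}$-valued that selects $1-c$ and thereby forces the canalized value $+1$ throughout, matching Proposition \ref{prop:most}.
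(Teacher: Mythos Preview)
Your argument is correct, but it establishes a different direction than the one the paper actually proves. The paper's proof of this corollary is two lines: Eq.~\reff{eq:corr:most:a:1} is read off directly from Proposition~\ref{prop:most} (taking $K$ to be the full set of relevant variables), and the specific value $c=2^{-(k-1)}$ is obtained from Parseval's identity $\hat f(\emptyset)^2+(2^k-1)c^2=1$ together with $|\hat f(\emptyset)|=1-c$. In other words, the paper assumes that all relevant variables are most dominant and \emph{derives} the Fourier constraints; this is also the direction used later in the paper, for instance in Proposition~\ref{prop:tightlow}.

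You, by contrast, read the corollary literally as the implication ``Fourier constraints $\Rightarrow$ all variables most dominant'' and prove it by induction on $k$, even dropping the standing NCF hypothesis. Your argument checks out: the explicit solution $\hat f(S)=c\,(-1)^{|S|-1}\prod_{i\in S}\alpha_i$, the restriction identity $\hat f^{(j,a_j)}(U)=(1-a_j\alpha_j)\hat f(U)$ for nonempty $U$, the Parseval computation giving $\hat f(\emptyset)^2=(1-c)^2$, the doubling $\hat g(U)=2\hat f(U)$ under the non-canalizing restriction, and the inductive step are all correct. Your sign analysis, forcing $\hat f(\emptyset)=1-c$ (hence $\beta_j=+1$ throughout) via the requirement that $f^{(j,\alpha_j)}$ be $\{-1,+1\}$-valued, is also right and in fact sharper than the paper's statement $|\hat f(\emptyset)|=1-c$. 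What you have proved is genuinely the converse of what the paper's proof establishes, so the two together give an equivalence. The paper's route is of course much shorter for its direction, since Proposition~\ref{prop:most} has already done the structural work; your route, with the strengthening to arbitrary Boolean $f$, is exactly what is needed to close the induction for the reverse implication.
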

\begin{proof}
Eq.~\reff{eq:corr:most:a:1} follow directly from Proposition \ref{prop:most}, while Eq. \reff{eq:corr:most:a:c} follows from Parsevals theorem.
\end{proof}
Corollary~\ref{cor:most:a} can easily be rewritten as:

\begin{corollary}\label{cor:most:b}
Let $f$ be a $\{\pi:\alpha:\beta\}$ NCF with $n$ variables of which $k$ are relevant. All variables are most canalizing if the absolute values of the Fourier coefficients fulfill the following condition,
\begin{align} \label{eq:corr:most:b:2}
\left|\hat{f}(S) \right| &= c \hspace{1cm} \forall S \subseteq K, S \neq \{\emptyset\},\\ \nonumber
\left|\hat{f}(\emptyset) \right| &= 1- c
\end{align}
with
\begin{align} 
c= 2^{-(k-1)}.
\end{align}
\end{corollary}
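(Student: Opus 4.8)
The plan is to read Corollary~\ref{cor:most:b} as a cosmetic restatement of Corollary~\ref{cor:most:a} in which the signed prefactor of Eq.~\reff{eq:corr:most:a:1} is absorbed into an absolute value. The starting observation is that each $\alpha_i$, and hence each $\bar\alpha_i=-\alpha_i$, lies in $\{-1,+1\}$, so for every nonempty $S\subseteq K$ and every $j\in S$ the scalar $\alpha_j\cdot\bigl(\prod_{i\in S,\,i\neq j}\bar\alpha_i\bigr)$ has modulus~$1$. Taking absolute values on both sides of Eq.~\reff{eq:corr:most:a:1} then gives $|\hat f(S)|=c$ for all nonempty $S\subseteq K$, which is the first line of Eq.~\reff{eq:corr:most:b:2}; the identity $|\hat f(\emptyset)|=1-c$ is already contained in Proposition~\ref{prop:most}. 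For the reverse implication I would recover the \emph{signed} relation Eq.~\reff{eq:corr:most:a:1} from the magnitude conditions and then invoke Corollary~\ref{cor:most:a}; this step is discussed last.

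For the value of the constant I would simply re-run the Parseval computation behind Corollary~\ref{cor:most:a}. Since $f$ has exactly $k$ relevant variables, every Fourier coefficient $\hat f(U)$ with $U\not\subseteq K$ vanishes, so Parseval's identity reads
\begin{equation*}
1=\sum_{U\subseteq[n]}\hat f(U)^2=\hat f(\emptyset)^2+\sum_{\emptyset\neq S\subseteq K}\hat f(S)^2=(1-c)^2+(2^k-1)\,c^2 .
\end{equation*}
Expanding yields $2^k c^2-2c=0$, and since $c>0$ by Proposition~\ref{prop:most} this forces $c=2^{1-k}=2^{-(k-1)}$, as asserted.

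The only step that is not a one-line manipulation is the reverse implication, namely passing from $|\hat f(S)|=c$ back to the signed identity in Eq.~\reff{eq:corr:most:a:1}. Here I would use that $f$ is nested canalizing: by Eq.~\reff{eq:cond:1} every restriction $f^{(U,\bar\alpha)}$ is again canalizing in each variable $j$, and unwinding this constraint exactly as in the proof of Proposition~\ref{prop:most} pins down the sign of $\alpha_j\cdot\chi_{S\setminus\{j\}}(\bar\alpha)\cdot\hat f(S)$ relative to $\sgn{\hat f(\emptyset)}$, so that all of these quantities equal the same \emph{positive} constant rather than merely agreeing in absolute value. Carrying through this induction — as opposed to the trivial absolute-value bookkeeping of the forward direction — is the main (and essentially the only) obstacle; once it is in place, Corollary~\ref{cor:most:a} shows that every variable is most dominant and the proof is complete.
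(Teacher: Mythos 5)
Your proposal is correct and follows essentially the same route as the paper: the paper offers no separate argument for this corollary beyond the remark that Corollary~\ref{cor:most:a} ``can easily be rewritten'' in this form, which is exactly your observation that the prefactor $\alpha_j\cdot\prod_{i\in S, i\neq j}\bar\alpha_i$ in Eq.~\reff{eq:corr:most:a:1} is a sign, so taking absolute values gives Eq.~\reff{eq:corr:most:b:2}; your Parseval computation for $c=2^{-(k-1)}$ likewise reproduces the one-line justification given for Corollary~\ref{cor:most:a}. The converse implication you flag as the ``only obstacle'' is not addressed by the paper at all (its ``if'' phrasing is inherited from Corollary~\ref{cor:most:a}, whose proof only invokes Proposition~\ref{prop:most} in the direction ``most dominant $\Rightarrow$ Fourier condition''), and that forward direction is the only one the paper subsequently uses (e.g.\ in Proposition~\ref{prop:tightlow}), so your fully executed part already covers everything the paper relies on.
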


\begin{corollary}
Let $f$ be a $\{\pi:\alpha:\beta\}$ NCF with $k>1$ relevant input variables. All variables are most canalizing and $\beta_i=b, \forall i\in \{1,
\ldots,k\}$. All such NCFs are completely described by $\alpha$ and $b$.
and hence there are $2^{(k+1)}$ such functions.
\end{corollary}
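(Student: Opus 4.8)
The plan is to show that the full Fourier spectrum of such an $f$ is pinned down by the pair $(\alpha,b)$, together with $k$, so that $f$ itself is determined by $(\alpha,b)$ (the Fourier expansion \reff{eq:FC} being a bijection between BFs and their coefficient vectors); then to check that the resulting map from such functions to $\{-1,+1\}^k\times\{-1,+1\}$ is injective and surjective, whence the count $2^{k+1}$. Throughout I would work with $K=rel(f)$, $|K|=k$, noting that for $k>1$ every most dominant variable is relevant, so the hypothesis says precisely that each $i\in K$ is most dominant.

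First I would read off the nonzero coefficients. By Corollary~\ref{cor:most:a} (the sign condition \reff{eq:corr:most:a:1} together with \reff{eq:corr:most:a:c}), for every nonempty $U\subseteq K$ and any $j\in U$ one has $\hat f(U)=c\,\alpha_j\,\chi_{U\setminus\{j\}}(\bar\alpha)$ with $c=2^{-(k-1)}$; substituting $\bar\alpha_i=-\alpha_i$ this collapses to $\hat f(U)=c\,(-1)^{|U|-1}\chi_U(\alpha)$, which depends only on $\alpha$ and $k$. For the empty set, Proposition~\ref{prop:most} gives $|\hat f(\emptyset)|=1-c$ and $\beta_i=\sgn{\hat f(\emptyset)}$ for $i\in K$; since $c\le 1/2<1$ for $k>1$ and $\beta_i=b$, this forces $\hat f(\emptyset)=b(1-c)$. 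Finally, any $U\not\subseteq K$ contains an irrelevant variable, so $\hat f(U)=0$ (pair $\x$ with $\x\oplus e_i$ in \reff{eq:fourier}). Hence all Fourier coefficients of $f$ are explicit functions of $(\alpha,b)$, so $f$ is determined by $(\alpha,b)$; moreover $\alpha_i=\sgn{\hat f(\{i\})}$ and $b=\sgn{\hat f(\emptyset)}$, which makes $(\alpha,b)$ a genuine invariant of $f$ and shows distinct pairs yield distinct functions.

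It remains to show every pair $(\alpha,b)\in\{-1,+1\}^k\times\{-1,+1\}$ is realised. Here I would exhibit the function $f_{\alpha,b}$ with $f_{\alpha,b}(\x)=-b$ if $x_i=-\alpha_i$ for all $i\in K$, and $f_{\alpha,b}(\x)=b$ otherwise: fixing $x_j=\alpha_j$ for any $j\in K$ makes it constantly $b$, while fixing $x_j=-\alpha_j$ yields a function of the same shape on $K\setminus\{j\}$, so by induction $f_{\alpha,b}$ is $\{\pi:\alpha:\beta\}$ nested canalizing for every ordering of $K$, every variable of $K$ is most dominant, and $\beta_i=b$ for all $i$. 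Consequently $f\mapsto(\alpha,b)$ is a bijection from the set of functions in question onto $\{-1,+1\}^k\times\{-1,+1\}$, so there are exactly $2^{k+1}$ of them.

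The genuinely load-bearing work is already supplied by Proposition~\ref{prop:most} and Corollary~\ref{cor:most:a}; the only points needing care in the write-up are the sign bookkeeping that turns \reff{eq:corr:most:a:1} into the closed form $\hat f(U)=c(-1)^{|U|-1}\chi_U(\alpha)$, and the verification that the explicit $f_{\alpha,b}$ is nested canalizing with all variables most dominant — which is where I expect the main (though modest) obstacle to lie, since one must be sure that no extra constraint on $(\alpha,b)$ is hidden in the nested-canalizing requirement.
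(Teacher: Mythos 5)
Your proof is correct and follows the same route as the paper, which simply invokes the preceding corollary (the determination of the full Fourier spectrum, up to the signs encoded by $\alpha$ and $b$, via Corollary~\ref{cor:most:a}/\ref{cor:most:b}) and leaves everything else implicit. You additionally supply the injectivity argument ($\alpha_i=\sgn{\hat f(\{i\})}$, $b=\sgn{\hat f(\emptyset)}$) and the explicit realization $f_{\alpha,b}$ establishing surjectivity, both of which the paper's one-line proof omits but which are needed for the count $2^{k+1}$ to be fully justified.
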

\begin{proof}
The statement follows directly from the previous corollary.
\end{proof}
As mentioned before, the zero coefficient plays an important role. Interestingly, we can describe the zero coefficients for NCFs in a recursive manner:
\begin{proposition} \label{prop:zero}
The zero coefficient of a $\{\pi:\alpha:\beta\}$ NFC $f$ can be recursively written as :
$$
    \hat f(\emptyset) =  \frac{1}{2} {\hat f}^{(\pi_1,\alpha_1)}(\emptyset) + \frac{1}{2}\beta_{1}.
$$
\end{proposition}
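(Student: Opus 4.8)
The plan is to obtain this as a direct specialization of Corollary \ref{cor:zero}, taking the splitting variable to be the first variable $\pi_1$ of the canalizing order. Corollary \ref{cor:zero} with $i=\pi_1$ gives
\[
\hat f(\emptyset) = \tfrac12\,\hat f^{(\pi_1,+)}(\emptyset) + \tfrac12\,\hat f^{(\pi_1,-)}(\emptyset),
\]
so everything reduces to evaluating the two branch coefficients using the recursive structure of a $\{\pi:\alpha:\beta\}$ NCF.

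By Definition \ref{def:nested_canalizing2}, since $f$ is $\{\pi:\alpha:\beta\}$ nested canalizing, one of the two restrictions of $f$ in the variable $\pi_1$ is the constant function $\beta_1$, while the other is exactly the depth-reduced $\{\pi:\alpha:\beta\}$ NCF on $k-1$ relevant variables that is carried forward in the recursion (the one whose zero coefficient is written $\hat f^{(\pi_1,\alpha_1)}(\emptyset)$ in the statement). For the constant branch, Eq.~\reff{eq:fourier} with $U=\emptyset$ says the zero Fourier coefficient is the average of a constant over the cube, hence equal to $\beta_1$. Substituting these two observations into the display above yields $\hat f(\emptyset) = \tfrac12\,\hat f^{(\pi_1,\alpha_1)}(\emptyset) + \tfrac12\,\beta_1$, which is the assertion.

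I do not expect a genuine obstacle here: the entire content is already packaged in Corollary \ref{cor:zero}, and the remaining two points — identifying which of the two restrictions at $\pi_1$ is the constant one (dictated by Definition \ref{def:nested_canalizing2}) and noting that a constant function has zero Fourier coefficient equal to that constant (immediate from Eq.~\reff{eq:fourier}) — are purely bookkeeping. If one prefers to avoid invoking Corollary \ref{cor:zero}, the same two-line argument can be read directly off the averaging form of Proposition \ref{prop:res2} taken at $U=\emptyset$, followed by the identical substitution.
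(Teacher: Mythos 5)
Your proof is correct and follows the paper's own route: the paper likewise derives Proposition~\ref{prop:zero} directly from Corollary~\ref{cor:zero}, and your two bookkeeping observations (the constant branch at $\pi_1$ has zero coefficient $\beta_1$, the other branch is the depth-reduced NCF) are exactly what "follows directly" suppresses. You even correctly resolve the paper's notational wrinkle that $f^{(\pi_1,\alpha_1)}$ in the statement must denote the non-constant branch, despite Definition~\ref{def:nested_canalizing2} assigning that symbol to the constant one.
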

\begin{proof}
Follows directly from Corollary \ref{cor:zero}.
\end{proof}
Further, the zero coefficient is upper bounded as shown by following  proposition:
\begin{proposition} \label{prop:zero_up_ncf}
The absolute value of the zero coefficient of a $\{\pi:\alpha:\beta\}$ NCF $f$ with $k>1$ relevant and uniformly distributed input variables can be bounded as:
$$
\frac{1}{2^{k-1}} \leq |\hat f(\emptyset)| \leq 1-\frac{1}{2^{k-1}}.
$$
\end{proposition}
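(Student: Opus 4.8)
The plan is to induct on the number $k$ of relevant variables, using the recursion for the zero coefficient. First I would set up that recursion: writing $g:=f^{(\pi_1,-\alpha_1)}$ for the non‑canalizing restriction of $f$, Definition~\ref{def:nested_canalizing2} guarantees that $g$ is again a $\{\pi:\alpha:\beta\}$ NCF with \emph{exactly} $k-1$ relevant variables, while the canalizing restriction $f^{(\pi_1,\alpha_1)}$ is the constant $\beta_1$, so $\hat f^{(\pi_1,\alpha_1)}(\emptyset)=\beta_1$. Hence Corollary~\ref{cor:zero} (this is the content of Proposition~\ref{prop:zero}) gives
\begin{equation*}
\hat f(\emptyset)=\tfrac12\,\hat g(\emptyset)+\tfrac12\,\beta_1,\qquad \beta_1\in\{-1,+1\}.
\end{equation*}

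For the base case $k=2$, the inner NCF $g$ has a single relevant variable, hence $g(\x)=x_j$ or $g(\x)=-x_j$ for some $j$ and $\hat g(\emptyset)=0$, so $|\hat f(\emptyset)|=\tfrac12$; since $\tfrac1{2^{k-1}}=1-\tfrac1{2^{k-1}}=\tfrac12$ when $k=2$, both inequalities hold (with equality). In the inductive step I assume $\tfrac1{2^{k-2}}\le|\hat g(\emptyset)|\le 1-\tfrac1{2^{k-2}}$ for $k\ge 3$. The upper bound then follows from the triangle inequality,
\begin{equation*}
|\hat f(\emptyset)|\le\tfrac12|\hat g(\emptyset)|+\tfrac12\le\tfrac12\Bigl(1-\tfrac1{2^{k-2}}\Bigr)+\tfrac12=1-\tfrac1{2^{k-1}},
\end{equation*}
and the lower bound from the reverse triangle inequality, using $|\beta_1|=1$ and $|\hat g(\emptyset)|<1$,
\begin{equation*}
|\hat f(\emptyset)|\ge\tfrac12|\beta_1|-\tfrac12|\hat g(\emptyset)|=\tfrac12-\tfrac12|\hat g(\emptyset)|\ge\tfrac12-\tfrac12\Bigl(1-\tfrac1{2^{k-2}}\Bigr)=\tfrac1{2^{k-1}},
\end{equation*}
which closes the induction.

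An equivalent, non‑inductive route is to unroll the recursion all the way down to the one‑relevant‑variable NCF at the bottom (whose zero coefficient is $0$), obtaining the closed form $\hat f(\emptyset)=\sum_{j=1}^{k-1}\beta_j 2^{-j}$ with each $\beta_j\in\{-1,+1\}$; the upper bound is then the geometric sum $\sum_{j=1}^{k-1}2^{-j}=1-2^{-(k-1)}$, and the lower bound holds because $2^{k-1}\hat f(\emptyset)=\sum_{j=1}^{k-1}\beta_j 2^{\,k-1-j}$ is an odd integer, hence nonzero.

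I do not expect a serious obstacle; the computation is routine. The two points needing care are: (i) correctly identifying that the $(k-1)$‑relevant‑variable object appearing in the recursion is the non‑canalizing branch $f^{(\pi_1,-\alpha_1)}$, so that the induction hypothesis is actually applicable; and (ii) starting the induction at $k=2$ rather than $k=1$ — for $k=1$ one has $|\hat f(\emptyset)|=0$, which already violates the claimed lower bound $1/2^{0}=1$, so the inequalities are genuinely specific to $k>1$. The reverse‑triangle‑inequality step is legitimate precisely because the induction hypothesis forces $|\hat g(\emptyset)|\le 1-2^{-(k-2)}<1$, keeping $\tfrac12-\tfrac12|\hat g(\emptyset)|$ positive.
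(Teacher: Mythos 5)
Your proof is correct and follows essentially the same route as the paper's (which is only a sketch): the recursion of Proposition~\ref{prop:zero}, the triangle inequality for the upper bound, the reverse triangle inequality for the lower bound, and induction down to the one-relevant-variable base case whose zero coefficient vanishes. Your closed-form unrolling $\hat f(\emptyset)=\sum_{j=1}^{k-1}\beta_j 2^{-j}$ with the parity argument is a clean self-contained alternative that the paper does not spell out, but it is not a different method in any essential sense.
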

\begin{proof} First, we prove the right hand side:
Using the triangle inequality we get from Proposition \ref{prop:zero}:
$$  
|\hat f(\emptyset)| \leq  \frac{1}{2} |\hat f^{(\pi_{i},\alpha_{i})}(\emptyset)| + \frac{1}{2} .
$$
Obviously the zero coefficient of a function with only one relevant variable $i$ is zero. The proposition now follows by induction.
The left hand side can be easily shown using the inverse triangle inequality and induction.
\end{proof}

As seen in Corollary~\ref{cor:most:b}, a NCF, whose variables are most dominant, fulfills the upper bound with equality. The following proposition follows directly from Proposition~\ref{prop:zero}:
\begin{proposition}
Let $f$ be a NCFs with $k$ relevant variables and alternating $\beta_i$, i.e., with $\beta=(-1, +1, -1, +1, \ldots)$ or $\beta=(+1,-1,+1,-1,\ldots)$, then the absolute value of the zero coefficient is given as:
$$
|\hat f(\emptyset)| =  \frac{1}{3} \left(\frac{1}{2^{k-1}} (-1)^k + 1\right)
$$
\end{proposition}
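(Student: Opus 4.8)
The plan is to read Proposition \ref{prop:zero} as a first‑order linear recurrence in $k$ and solve it, after first reducing the two alternating patterns to a single case.

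First I would normalize the leading sign. Write $\beta_1=s\in\{-1,+1\}$, and note that $f\mapsto -f$ sends a $\{\pi:\alpha:\beta\}$ NCF to a $\{\pi:\alpha:-\beta\}$ NCF and negates every Fourier coefficient (immediate from Eq.~\reff{eq:fourier}). Hence the two alternating sign vectors $\beta=(+1,-1,+1,\dots)$ and $\beta=(-1,+1,-1,\dots)$ yield zero coefficients that differ only by a sign, so it is enough to compute $\hat f(\emptyset)$ for, say, $\beta_1=+1$. Let $z_k$ denote $\hat f(\emptyset)$ for an alternating NCF on $k$ relevant variables with $\beta_1=+1$; then an alternating NCF on $k$ relevant variables with $\beta_1=-1$ has zero coefficient $-z_k$, and in both cases $|\hat f(\emptyset)|=|z_k|$.

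Next I would apply Proposition \ref{prop:zero} with $i=\pi_1$. Restricting $f$ along its most dominant variable yields a NCF $g$ on $k-1$ relevant variables whose sign vector is the shifted one $(\beta_2,\dots,\beta_k)$, which is again alternating but now begins with $\beta_2=-\beta_1=-1$; therefore $\hat g(\emptyset)=-z_{k-1}$. Substituting into Proposition \ref{prop:zero} gives
\[
z_k=\tfrac12\,\hat g(\emptyset)+\tfrac12\beta_1=-\tfrac12\,z_{k-1}+\tfrac12,
\]
with base case $z_1=0$, since a Boolean function with a single relevant variable $\pi_1$ equals $\pm x_{\pi_1}$ and hence has zero mean. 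Solving this recurrence is routine: its fixed point is $\tfrac13$, so $z_k-\tfrac13=-\tfrac12(z_{k-1}-\tfrac13)$, and with $z_1-\tfrac13=-\tfrac13$ this yields
\[
z_k=\tfrac13-\tfrac13\left(-\tfrac12\right)^{k-1}=\tfrac13\left(1+\frac{(-1)^k}{2^{k-1}}\right).
\]
Finally, since $1+(-1)^k/2^{k-1}\ge 1-1/2^{k-1}\ge 0$ for every $k\ge 1$, $z_k$ is nonnegative, so $|\hat f(\emptyset)|=|z_k|=\tfrac13\big((-1)^k/2^{k-1}+1\big)$ in both alternating cases, which is the claim.

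There is essentially no analytic obstacle here — it is a one‑line linear recursion. The only points requiring care are bookkeeping: (i) confirming that the restriction appearing in Proposition \ref{prop:zero} is the $(k-1)$‑variable NCF branch (not the constant branch) and that its sign vector is the shifted, still‑alternating $(\beta_2,\dots,\beta_k)$; (ii) the negation symmetry that collapses the two alternating patterns into one case; and (iii) checking the base case $z_1=0$ and the sign in the closed form, i.e.\ that the factor $(-\tfrac12)^{k-1}$ produces $(-1)^k$ rather than $(-1)^{k-1}$ in the final expression.
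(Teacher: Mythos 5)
Your proposal is correct and follows exactly the route the paper intends: the paper gives no written proof, stating only that the result ``follows directly from Proposition~\ref{prop:zero}'', and your argument --- reducing the two alternating patterns via the negation symmetry, identifying the restricted branch as the shifted alternating NCF, and solving the linear recurrence $z_k=-\tfrac12 z_{k-1}+\tfrac12$ with $z_1=0$ --- supplies precisely the omitted derivation. The bookkeeping points you flag (in particular that $\hat f^{(\pi_1,\alpha_1)}$ in Proposition~\ref{prop:zero} must denote the non-constant, $(k-1)$-variable branch despite the paper's notational clash with Definition~\ref{def:nested_canalizing2}) are handled correctly.
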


\subsection*{Average Sensitivity}
\subsubsection*{Definition}
The average sensitivity (as) \cite{BKS99} is a measure to quantify the influence of random perturbations of the inputs of Boolean functions. It is defined as the sum of the influences of the inputs of the function, which is defined as the probability that the function's output changes, if input $i$ is flipped:
\begin{definition} (\cite{HSB,BL85})
The influence of variable $i$ on the function $f$ is defined as
$$
I_i(f)=\Pr{f(\X)\neq f(\X\oplus e_i)}.
$$
\end{definition}
The influence can be related to the Fourier spectra as follows \cite{BT96}:
$$
I_i(f) = \sum_{S\subseteq[n]:i\in S}\hat f(S)^2.
$$

The average sensitivity is defined as the sum of the influences of all input variables of $f$.
\begin{definition} (\cite{HSB,BKS99})
The average sensitivity of $f$ is defined as
$$
\as(f)=\sum_{i\in[n]} I_i(f).
$$
\end{definition}
Consequently the average sensitivity can also be expressed in terms of the Fourier coefficients \cite{HSB} as:
\begin{equation} \label{eq:as_fc}
\as(f)=\sum_{S\subseteq[n], S\neq \emptyset}\hat f(S)^2 |S|.
\end{equation}

\subsubsection*{Restricted Functions}
To investigate the average sensitivity of restricted functions we first define $\xi:\mathcal F_n \times \mathcal F_n \rightarrow \mathbb R$ by
\begin{equation}
    \xi(f,g)  = \frac{1}{2}\left(1 - \sum_{U \subseteq [n]} \hat f(U) \hat g(U)\right).
    \label{eq:def_xi}
\end{equation}
Our next result shows the relation between the average sensitivity of a BF and the average sensitivity of its two restricted functions.
\begin{theorem}\label{theorem:1}
Let $f^{(i,+)}, f^{(i,-)}$ be the restrictions of $f$ to some relevant variable $i$ of $f$. Then 
\begin{align*}\nonumber      
  \as(f)  =&\frac{1}{2} \as(f^{(i,+)})  + \frac{1}{2}\as(f^{(i,-)})+ \xi(f^{(i,+)},f^{(i,-)}).
\end{align*}
\end{theorem}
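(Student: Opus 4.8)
The plan is to expand $\as(f)$ via its Fourier expression~\reff{eq:as_fc} and the composition rule of Proposition~\ref{prop:res2}, then collect terms and invoke Parseval. Concretely, I would split the sum $\as(f)=\sum_{S\neq\emptyset}\hat f(S)^2|S|$ according to whether $i\in S$: for $S$ with $i\notin S$ leave it as $S$, and for $S$ with $i\in S$ write $S=T\cup\{i\}$ with $T\subseteq[n]\setminus\{i\}$. Abbreviating $p(T):=\hat f^{(i,+)}(T)$ and $m(T):=\hat f^{(i,-)}(T)$ for $T\subseteq[n]\setminus\{i\}$, Proposition~\ref{prop:res2} gives $\hat f(T)=\tfrac12\bigl(p(T)+m(T)\bigr)$ when $i\notin T$ and $\hat f(T\cup\{i\})=\tfrac12\bigl(p(T)-m(T)\bigr)$.

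Substituting these and expanding the squares, and noting that the empty set may be adjoined to the $i\notin S$ sum for free (it has $|S|=0$), I get
$$\as(f)=\tfrac14\sum_{T}\bigl(p(T)+m(T)\bigr)^2|T|+\tfrac14\sum_{T}\bigl(p(T)-m(T)\bigr)^2(|T|+1),$$
both sums over all $T\subseteq[n]\setminus\{i\}$. Expanding $(p\pm m)^2$ and regrouping, the cross terms $p(T)m(T)$ carry a net factor $|T|-(|T|+1)=-1$ and the square terms $p(T)^2+m(T)^2$ carry $2|T|+1$, so
$$\as(f)=\tfrac14\sum_{T}\bigl(p(T)^2+m(T)^2\bigr)(2|T|+1)-\tfrac12\sum_{T}p(T)m(T).$$

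To finish, I would use two facts. Since $i\notin rel(f^{(i,+)})$ and $i\notin rel(f^{(i,-)})$, the Fourier mass of each restriction lies on subsets of $[n]\setminus\{i\}$, hence $\sum_T p(T)^2|T|=\as(f^{(i,+)})$, $\sum_T m(T)^2|T|=\as(f^{(i,-)})$, and $\sum_{U\subseteq[n]}\hat f^{(i,+)}(U)\hat f^{(i,-)}(U)=\sum_T p(T)m(T)$; and by Parseval $\sum_T p(T)^2=\sum_T m(T)^2=1$ because $f^{(i,\pm)}$ are $\{-1,+1\}$-valued. Writing $2|T|+1=2|T|+1$ and splitting the first sum accordingly yields $\tfrac12\as(f^{(i,+)})+\tfrac12\as(f^{(i,-)})+\tfrac12$, so that $\as(f)=\tfrac12\as(f^{(i,+)})+\tfrac12\as(f^{(i,-)})+\tfrac12\bigl(1-\sum_{U}\hat f^{(i,+)}(U)\hat f^{(i,-)}(U)\bigr)$, which is the assertion by the definition~\reff{eq:def_xi} of $\xi$.

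This is essentially a bookkeeping computation; there is no conceptual obstacle. The one place needing care is the index juggling between the two sums — one originally over nonempty $T$, one over all $T$, with the latter shifted by the $+1$ coming from adjoining the index $i$ — so the main effort is keeping the combinatorial weights $2|T|+1$ and $-1$ straight rather than proving anything deep.
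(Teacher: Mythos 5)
Your proposal is correct and follows essentially the same route as the paper's own proof: expand $\as(f)=\sum_{S\neq\emptyset}\hat f(S)^2|S|$ via the composition rule of Proposition~\ref{prop:res2}, split according to whether $i\in S$ (picking up the weights $|T|$ versus $|T|+1$), expand the squares so the cross terms contribute $-\tfrac12\sum_T p(T)m(T)$, and finish with Parseval and the fact that the restrictions have no Fourier mass on sets containing $i$. The bookkeeping of the weights $2|T|+1$ and $-1$ is exactly right and matches the paper's computation.
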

\begin{proof}
Starting from Eq. \reff{eq:as_fc}, we can fractionize the Fourier coefficients according to Proposition \ref{prop:res2}. This yields:
\begin{align*}
    \as (f) =& \sum_{S\subseteq[n], S\neq \emptyset} \left(\frac{1}{2}
    \hat f^{(i,+)}( S \setminus \{i\}) + \frac{1}{2}\left(-1\right)^{| S \cap \{i\}|  } \hat f^{(i,-)}( S \setminus \{i\})  \right)^2 |S| \\
    =& \frac{1}{4}\sum_{S\subseteq[n], S\neq \emptyset} \bigg(\left(
    \hat f^{(i,+)}( S \setminus \{i\})   \right)^2
    +  \left(\left(-1\right)^{| S \cap \{i\}|  } \hat f^{(i,-)}( S \setminus \{i\})  \right)^2 \\
    &+ 2 \left(-1 \right) ^{| S \cap \{i\}|} 
    \hat f^{(i,+)}( S \setminus \{i\})\hat f^{(i,-)}( S \setminus \{i\}) \bigg) |S|
\end{align*}
which leads us to:
\begin{align*}
    \as (f) =& \frac{1}{4}\sum_{S\subseteq[n], S\neq \emptyset} \left( 
    \hat f^{(i,+)}( S \setminus \{i\})   \right)^2 |S| \\
    +&  \frac{1}{4}\sum_{S\subseteq[n], S\neq \emptyset}  \left( \hat f^{(i,-)}( S \setminus \{i\})  \right)^2 |S| \\ \nonumber
     +& \frac{1}{2}\sum_{S\subseteq[n], S\neq \emptyset} \left(-1 \right) ^{| S \cap \{i\}|} 
    \hat f^{(i,+)}( S \setminus \{i\})\hat f^{(i,-)}( S \setminus \{i\}) |S|,
\end{align*}
and hence to:
\begin{align*}\nonumber
    \as (f) =& \frac{1}{4} \sum_{S\subseteq[n]\setminus \{i\}, S\neq \emptyset} \left(
    \hat f^{(i,+)}( S)   \right)^2 |S| + \frac{1}{4} \sum_{S\subseteq[n]\setminus\{i\}} \left( 
    \hat f^{(i,+)}( S )   \right)^2 \left(1+ |S|\right) \\ \nonumber
    +& \frac{1}{4} \sum_{S\subseteq[n]\setminus\{i\}, S\neq \emptyset}  \left(\hat f^{(i,-)}( S)  \right)^2 |S| + \frac{1}{4} \sum_{S\subseteq[n]\setminus\{i\}}  \left(\hat f^{(i,-)}( S)  \right)^2 \left(1+|S|\right) \\ \nonumber
+& \frac{1}{2} \sum_{S\subseteq[n]\setminus\{i\}, S\neq \emptyset}\left(-1 \right) ^{0} 
    \hat f^{(i,+)}( S)\hat f^{(i,-)}( S) |S| \\ \nonumber
    +& \frac{1}{2} \sum_{S\subseteq[n]\setminus\{i\}} \left(-1 \right) ^{1} 
    \hat f^{(i,+)}( S)\hat f^{(i,-)}( S) \left(1 + |S|\right).
\end{align*}
Since $f^{(i,a)}(S) = 0$ for all $S: i \in S$ we can write 
\begin{align*}\nonumber
    \as (f) =&\frac{1}{2} \underbrace{\sum_{S\subseteq[n], S\neq \emptyset} \left(
    \hat f^{(i,+)}( S)\right)^2 |S|}_{=\as(f^{(i,+)})} + \frac{1}{4} \underbrace{\sum_{S\subseteq[n]} \left( \hat f^{(i,+)}( S )   \right)^2}_{=1} \\ \nonumber
    +&\frac{1}{2}  \underbrace{\sum_{S\subseteq[n], S\neq \emptyset} \left( \hat f^{(i,-)}( S)  \right)^2 |S|}_{=\as(f^{(i,-)})} + \frac{1}{4} \underbrace{\sum_{S\subseteq[n]} \left(\hat f^{(i,-)}( S)  \right)^2}_{=1}\\ \nonumber
    +& \frac{1}{2} \sum_{S\subseteq[n], S\neq \emptyset} 
    \hat f^{(i,+)}( S)\hat f^{(i,-)}( S) |S| 
    - \frac{1}{2} \sum_{S\subseteq[n],S\neq \emptyset}
    \hat f^{(i,+)}( S)\hat f^{(i,-)}( S) |S| \\ \nonumber
    -& \frac{1}{2} \sum_{S\subseteq[n]}   \hat f^{(i,+)}( S)\hat f^{(i,-)}(S).
\end{align*}
Finally we get      
\begin{align*}\nonumber      
  \as(f)  =&\frac{1}{2} \as(f^{(i,+)})  +\frac{1}{2}\as(f^{(i,-)})
     + \frac{1}{2} 
     -\frac{1}{2}   \sum_{S\subseteq[n]}  \hat f^{(i,+)}( S)\hat f^{(i,-)}( S),
\end{align*}
which concludes the proof.
\end{proof}

For NCFs we obtain:
\begin{corollary} \label{cor:as_ncf}
The average sensitivity of a $\{\pi:\alpha:\beta\}$ NCF can recursively be described as:
\begin{equation}
    \as(f) =  \frac{1}{2} \left( \as(f^{(\pi_{i},\alpha_{i})})  + 1- \hat f^{(\pi_{i},\alpha_{i})}(\emptyset) \beta_{i} \right).
\end{equation}
\end{corollary}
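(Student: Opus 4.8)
The plan is to specialize Theorem~\ref{theorem:1} to the variable $\pi_i$ at which $f$ canalizes and then to use the fact that one of its two restrictions is a constant function. By the definition of a $\{\pi:\alpha:\beta\}$ NCF, one of $f^{(\pi_i,+)},f^{(\pi_i,-)}$ is the constant function $h\in\mathcal F_n$ with $h\equiv\beta_i$, while the other is the NCF with $k-1$ relevant variables, which is the function $f^{(\pi_i,\alpha_i)}$ appearing in the statement. Theorem~\ref{theorem:1} then gives
$$
\as(f)=\frac{1}{2}\,\as\!\left(f^{(\pi_i,\alpha_i)}\right)+\frac{1}{2}\,\as(h)+\xi\!\left(f^{(\pi_i,\alpha_i)},h\right),
$$
so it remains only to evaluate the last two summands.

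First, $\as(h)=0$: by Eq.~\reff{eq:as_fc} the average sensitivity is a sum over $S\neq\emptyset$, whereas the constant function $h$ has $\hat h(\emptyset)=\beta_i$ and $\hat h(U)=0$ for all $U\neq\emptyset$. Second, substituting these Fourier coefficients of $h$ into the definition~\reff{eq:def_xi} of $\xi$, the inner product $\sum_{U}\hat f^{(\pi_i,\alpha_i)}(U)\,\hat h(U)$ collapses to the single surviving term $\beta_i\,\hat f^{(\pi_i,\alpha_i)}(\emptyset)$, so that
$$
\xi\!\left(f^{(\pi_i,\alpha_i)},h\right)=\frac{1}{2}\left(1-\hat f^{(\pi_i,\alpha_i)}(\emptyset)\,\beta_i\right).
$$
Inserting both evaluations into the displayed instance of Theorem~\ref{theorem:1} yields exactly the claimed recursion.

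No step here is genuinely difficult; the one place where care is needed is the bookkeeping of the two restrictions, namely matching $f^{(\pi_i,\alpha_i)}$ with the branch that is still a NCF (with $k-1$ relevant variables) and $\beta_i$ with the constant value of the complementary branch. A quick check of the dictator case $k=1$, where both branches are constant, fixes the sign convention: there $f^{(\pi_i,\alpha_i)}$ is the constant $-\beta_i$ and the formula returns $\frac{1}{2}(0+1+1)=1=\as(f)$, as it should. It is also worth noting that the recursion is well-posed for every $k\ge 1$, since by Definition~\ref{def:nested_canalizing2} a constant function is itself a NCF with $0$ relevant variables.
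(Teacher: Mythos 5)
Your proof is correct and is exactly the intended derivation: the paper states this corollary as an immediate specialization of Theorem~\ref{theorem:1} to the canalizing variable, using that the constant branch contributes zero average sensitivity and reduces $\xi$ to $\frac{1}{2}\bigl(1-\hat f^{(\pi_i,\alpha_i)}(\emptyset)\beta_i\bigr)$. Your explicit handling of which restriction is the constant one (and the $k=1$ sanity check) is a welcome clarification of a point the paper's notation leaves slightly ambiguous.
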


In \cite{LAM13} an upper bound on the average sensitivity of NCF has been conjectured. In the following theorem, we prove this conjecture to be correct.
\begin{theorem}\label{theorem:ncfbound}
The average sensitivity of a NCF with $k=rel(f)$ relevant and uniform distributed variables is bounded by 
\begin{equation} \label{eq:prop:ncfbound}
    \frac{k}{2^{k-1}} \leq \as(f) \leq \frac{4}{3} -2^{-k}-\frac{1}{3}\cdot 2^{-k}\cdot(-1)^{k}.
\end{equation}
\end{theorem}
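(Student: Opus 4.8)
The plan is to induct on $k$ using the recursion of Corollary~\ref{cor:as_ncf}, rewritten in terms of the zero Fourier coefficient via Proposition~\ref{prop:zero} and pinched between the estimates of Proposition~\ref{prop:zero_up_ncf}. Fix a NCF $f$ with $k\ge 2$ relevant variables, let $g$ be the restriction of $f$ obtained by fixing one most dominant variable to the value for which $f$ does \emph{not} become constant (so, by Definition~\ref{def:nested_canalizing2}, $g$ is a NCF with $k-1$ relevant variables), and let $\beta_1$ be the corresponding canalizing output. Proposition~\ref{prop:zero} gives $\hat f(\emptyset)=\tfrac12\hat g(\emptyset)+\tfrac12\beta_1$, hence $\beta_1\hat f(\emptyset)=\tfrac12\bigl(1+\beta_1\hat g(\emptyset)\bigr)\ge\tfrac12\bigl(1-|\hat g(\emptyset)|\bigr)>0$, because $g$ has a relevant variable and is therefore non-constant, so $|\hat g(\emptyset)|<1$. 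Thus $\beta_1=\sgn{\hat f(\emptyset)}$ and $\beta_1\hat g(\emptyset)=2|\hat f(\emptyset)|-1$, and Corollary~\ref{cor:as_ncf} collapses to the key identity
\begin{equation}\label{eq:plan:key}
\as(f)=\tfrac12\as(g)+1-|\hat f(\emptyset)|.
\end{equation}

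For the lower bound I would feed into \reff{eq:plan:key} the estimate $|\hat f(\emptyset)|\le 1-2^{-(k-1)}$ from Proposition~\ref{prop:zero_up_ncf}, obtaining $\as(f)\ge\tfrac12\as(g)+2^{-(k-1)}$; the induction hypothesis $\as(g)\ge (k-1)2^{-(k-2)}$ then yields $\as(f)\ge(k-1)2^{-(k-1)}+2^{-(k-1)}=k\,2^{-(k-1)}$. The base case $k=1$ is immediate, since such an $f$ equals $\pm x_i$ and has $\as(f)=1$; this is a one-step induction on $k$.

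For the upper bound the decisive observation is that $\beta_1\hat g(\emptyset)\ge-|\hat g(\emptyset)|$ turns \reff{eq:plan:key} into $\as(f)\le\tfrac12\bigl(\as(g)+|\hat g(\emptyset)|\bigr)+\tfrac12$, while applying \reff{eq:plan:key} to $g$ itself — legitimate as soon as $g$ has at least two relevant variables, i.e.\ $k\ge 3$ — produces the cancellation $\as(g)+|\hat g(\emptyset)|=\tfrac12\as(h)+1$, where $h$ is the analogous restriction of $g$, a NCF with $k-2$ relevant variables. Hence $\as(f)\le\tfrac14\as(h)+1$; substituting the induction hypothesis for $k-2$ and simplifying (using $\tfrac14\cdot 2^{-(k-2)}=2^{-k}$ and $(-1)^{k-2}=(-1)^{k}$) reproduces exactly $\as(f)\le\tfrac43-2^{-k}-\tfrac13(-1)^{k}2^{-k}$. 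Equivalently, the maximal average sensitivity $A_k$ of a NCF with $k$ relevant variables obeys $A_k=1+\tfrac14 A_{k-2}$, and together with the two base cases $A_1=A_2=1$ (for $k=2$ the function $g$ has a single relevant variable, so $\as(g)=1$, $\hat g(\emptyset)=0$, and \reff{eq:plan:key} gives $\as(f)=1$) this linear recursion has the closed form appearing on the right of \reff{eq:prop:ncfbound}.

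I expect the only delicate point to be the sign bookkeeping: one must verify that $\beta_1=\sgn{\hat f(\emptyset)}$ so that the entire recursion can be phrased through $|\hat f(\emptyset)|$, and one must notice that the cancellation identity for $g$ is available only once $g$ has at least two relevant variables, which is what forces the upper-bound induction to advance in steps of two with base cases at both $k=1$ and $k=2$. Finally, both inequalities in \reff{eq:prop:ncfbound} are tight: the lower bound is attained by the NCFs in which every variable is most dominant (Corollary~\ref{cor:most:b}), where $|\hat f(S)|=2^{-(k-1)}$ for all nonempty $S\subseteq K$ forces $\as(f)=2^{-2(k-1)}\sum_{S\subseteq K,\,S\neq\emptyset}|S|=k\,2^{-(k-1)}$, and the upper bound is attained by the NCFs with alternating $\beta$.
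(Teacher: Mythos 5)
Your proof is correct and follows essentially the same route as the paper: both arguments apply the recursion of Corollary~\ref{cor:as_ncf} twice (together with Proposition~\ref{prop:zero}) to reach the key two-step inequality $\as(f)\le\frac{1}{4}\as(h)+1$, then induct in steps of two from the base cases $k=1,2$, and both obtain the lower bound from $\as(f)\ge\frac{1}{2}\as(g)+2^{-(k-1)}$ via Proposition~\ref{prop:zero_up_ncf}. Your reformulation through $|\hat f(\emptyset)|$ and the explicit verification that $\beta_1=\sgn{\hat f(\emptyset)}$ is only a cosmetic (and slightly more careful) repackaging of the paper's direct bound on $-\frac{1}{4}\hat f''(\emptyset)(\beta_i+\beta_{i+1})-\frac{1}{4}\beta_i\beta_{i+1}$.
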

The bounds in Eq.~\reff{eq:prop:ncfbound} will turn out to be tight.
\begin{proof}
We first prove the upper bound in Eq.~\reff{eq:prop:ncfbound}. Let us recall Corollary~\ref{cor:as_ncf}:
$$
\as(f) =  \frac{1}{2} \left( \as(f^{(\pi_{i},\alpha_{i})})  + 1- \hat f^{(\pi_{i},\alpha_{i})}(\emptyset) \beta_{i} \right).
 $$
If we apply Corollary~\ref{cor:as_ncf} again and use Proposition \ref{prop:zero}, we get:
\begin{align}\nonumber
\as(f) =&  \frac{1}{2} \left( \frac{1}{2} \left( \as({f^{(\pi_{i},\alpha_{i})}}^{(\pi_{i+1},\alpha_{i+1})})  + 1- {\hat f}^{(\pi_{i},\alpha_{i})^{(\pi_{i+1},\alpha_{i+1})}}(\emptyset) \beta_{i+1} \right) \right.\\ \nonumber
&+ \left. 1- \left(\frac{1}{2}
    {\hat f}^{(\pi_i,\alpha_i)^{(\pi_{i+1},\alpha_{i+1})}}(\emptyset) +   \frac{1}{2}\beta_{i+1} \right) \beta_{i} \right) \\ \nonumber
    =&  \frac{1}{4}\as({f^{(\pi_{i},\alpha_{i})}}^{(\pi_{i+1},\alpha_{i+1})})  +  \frac{3}{4}-  \frac{1}{4}{\hat f}^{(\pi_{i},\alpha_{i})^{(\pi_{i+1},\alpha_{i+1})}}(\emptyset) \beta_{i+1} \\ \nonumber
&- \frac{1}{4}
    {\hat f}^{(\pi_i,\alpha_i)^{(\pi_{i+1},\alpha_{i+1})}}(\emptyset)\beta_{i}-   \frac{1}{4}\beta_{i+1}\beta_{i} \\ \nonumber 
 =&  \frac{1}{4}\as({f^{(\pi_{i},\alpha_{i})}}^{(\pi_{i+1},\alpha_{i+1})})  -  \frac{1}{4}{\hat f}^{(\pi_{i},\alpha_{i})^{(\pi_{i+1},\alpha_{i+1})}}(\emptyset)(\beta_{i}+\beta_{i+1}) \\ \nonumber
&-\frac{1}{4}\beta_{i+1}\beta_{i} +  \frac{3}{4} 
\end{align}
Since $\beta_{i},\beta_{i+1}  \in \{-1,+1\}$ and $|{\hat f}^{(\pi_{i},\alpha_{i})^{(\pi_{i+1},\alpha_{i+1})}}(\emptyset)| \leq 1$, 

$$
 -\frac{1}{4}{\hat f}^{(\pi_{i},\alpha_{i})^{(\pi_{i+1},\alpha_{i+1})}}(\emptyset)(\beta_{i}+\beta_{i+1}) -\frac{1}{4}\beta_{i+1}\beta_{i} \leq \frac{1}{4}.
$$
Thus we obtain
\begin{equation} \label{eq:proof:as1}
    \as(f) \leq  \frac{1}{4} \as({f^{(\pi_{i},\alpha_{i})}}^{(\pi_{i+1},\alpha_{i+1})})  +  1,
\end{equation}
where ${f^{(\pi_{i},\alpha_{i})}}^{(\pi_{i+1},\alpha_{i+1})}$ has $k-2$ relevant variables.
We will now show the theorem by induction.
For $k=1$ the upper bound in Eq.~\reff{eq:prop:ncfbound} simplifies to 
$$
\as(f)\leq 1,
$$
which is obviously true.
For $k=2$ the upper bound in Eq.~\reff{eq:prop:ncfbound} results in 
$$
\as(f)\leq 1,
$$
which is also true and can be verified by inspecting all possible functions.

Using Eq.~\reff{eq:prop:ncfbound} as the induction hypothesis, and applying it on ${f^{(\pi_{i},\alpha_{i})}}^{(\pi_{i+1},\alpha_{i+1})}$ in Eq.~\reff{eq:proof:as1}, which has $k-2$ relevant variables, yields:

\begin{align*}
   \as(f) &\leq \frac{1}{4} \left(\frac{4}{3} -2^{-(k-2)}-\frac{1}{3}\cdot 2^{-(k-2)}\cdot(-1)^{k-2}\right) +  1 \\ \nonumber
   &= \frac{4}{3}-2^{-k}-\frac{1}{3} 2^{-k}(-1)^{k} , 
\end{align*}
which concludes the induction.

The lower bound in Eq.~\reff{eq:prop:ncfbound} is commonly known in literature and can be proven along the lines of the proof of the upper bound, using the following inequality, which follows from Corollary~\ref{cor:as_ncf} and Proposition~\ref{prop:zero_up_ncf}:
$$
\as(f) \geq  \frac{1}{2} \left( \as(f^{(\pi_{i},\alpha_{i})})  + \frac{1}{2^{k-2}} \right).
$$
\end{proof}
The tightness of Eq.~\reff{eq:prop:ncfbound} is shown in Propositions~\ref{prop:tightlow} and~\ref{prop:tightup}.

%We can generalize the upper bound of Theorem \ref{theorem:ncfbound} as:
We can further upper bound the right hand side of Theorem~\ref{theorem:ncfbound} in order to make it independent of the number of relevant variables~$k$:
\begin{corollary}\label{cor:43}
The average sensitivity of a NCF with uniformly distributed variables satisfies
\begin{equation*}
 \as(f) < \frac{4}{3}.
\end{equation*}
\end{corollary}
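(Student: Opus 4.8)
The plan is to read Corollary~\ref{cor:43} off directly from the upper bound established in Theorem~\ref{theorem:ncfbound}. Writing $k=|rel(f)|$, that theorem gives
$$
\as(f) \leq \frac{4}{3} - 2^{-k} - \frac{1}{3}\cdot 2^{-k}\cdot(-1)^{k} = \frac{4}{3} - 2^{-k}\left(1+\frac{(-1)^k}{3}\right),
$$
so it suffices to show that the subtracted quantity $2^{-k}\bigl(1+\tfrac{(-1)^k}{3}\bigr)$ is strictly positive for every $k$ that can occur.

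First I would dispose of the degenerate cases $k\in\{0,1\}$: a NCF with no relevant variable is constant, hence $\as(f)=0<\tfrac{4}{3}$, and for $k=1$ the bound of Theorem~\ref{theorem:ncfbound} evaluates to $1<\tfrac{4}{3}$. For general $k\geq 1$, note that $1+\tfrac{(-1)^k}{3}$ equals $\tfrac{4}{3}$ when $k$ is even and $\tfrac{2}{3}$ when $k$ is odd, so in all cases it is bounded below by $\tfrac{2}{3}>0$. Since also $2^{-k}>0$, we get $2^{-k}\bigl(1+\tfrac{(-1)^k}{3}\bigr)\geq \tfrac{2}{3}\,2^{-k}>0$, and therefore
$$
\as(f) \leq \frac{4}{3} - \frac{2}{3}\,2^{-k} < \frac{4}{3}.
$$

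There is essentially no obstacle here: all the work sits in Theorem~\ref{theorem:ncfbound}, and this corollary is just the remark that the $k$-dependent correction term never vanishes and keeps a fixed sign. The one point worth making explicit is uniformity in $k$: the bound $\tfrac{4}{3}-\tfrac{2}{3}\,2^{-k}$ is increasing in $k$ with supremum $\tfrac{4}{3}$, so the strict inequality holds for every finite number of relevant variables, while no single constant strictly below $\tfrac{4}{3}$ would work for all $k$.
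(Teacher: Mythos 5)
Your proposal is correct and is exactly the argument the paper intends: the corollary is stated as an immediate consequence of bounding the right-hand side of Theorem~\ref{theorem:ncfbound} uniformly in $k$, and your verification that the correction term $2^{-k}\bigl(1+\tfrac{(-1)^k}{3}\bigr)$ is always at least $\tfrac{2}{3}\,2^{-k}>0$ is the (omitted) computation that makes this precise. Your added remarks on the degenerate cases and on the supremum being $\tfrac{4}{3}$ are sound and slightly more careful than the paper, but the route is the same.
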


We next show that the bounds in Theorem~\ref{theorem:ncfbound} are tight.
\begin{proposition} \label{prop:tightlow}
Let $f$ be a NCF, whose variables are all most dominant. Then $f$ satisfies the upper bound in Theorem \ref{theorem:ncfbound} with equality.
\end{proposition}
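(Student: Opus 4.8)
The plan is to read off the entire Fourier spectrum of $f$ from Corollary~\ref{cor:most:b} and substitute it into the spectral formula for the average sensitivity, Eq.~\reff{eq:as_fc}. When all $k$ relevant variables are most dominant, Corollary~\ref{cor:most:b} pins down every coefficient in absolute value: $|\hat f(S)| = c = 2^{-(k-1)}$ for each nonempty $S$ contained in the set $K$ of relevant variables, while $\hat f(S)=0$ for every $S \not\subseteq K$. Hence in Eq.~\reff{eq:as_fc} each surviving squared coefficient equals $c^2$, and I would write
\begin{equation*}
\as(f) = \sum_{\emptyset \neq S \subseteq K} \hat f(S)^2\,|S| = c^2 \sum_{\emptyset \neq S \subseteq K} |S|.
\end{equation*}

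The only genuine computation is the combinatorial identity $\sum_{S \subseteq K} |S| = \sum_{j=0}^{k} j\binom{k}{j} = k\,2^{k-1}$. Plugging this in gives
\begin{equation*}
\as(f) = 2^{-2(k-1)}\cdot k\,2^{k-1} = \frac{k}{2^{k-1}}.
\end{equation*}

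The main obstacle is not computational but one of consistency: the value $k/2^{k-1}$ is precisely the \emph{left}-hand side of Eq.~\reff{eq:prop:ncfbound}, that is, the lower bound, and for $k>2$ it is strictly below the right-hand (upper) bound (for instance $\as(f)=3/4$ at $k=3$, whereas the upper bound there equals $5/4$). An all-most-dominant NCF therefore cannot meet the upper bound with equality, so the word \emph{upper} in the statement must be a typo for \emph{lower}. This reading is forced by the paper itself: Proposition~\ref{prop:most} makes $\beta_i=\sgn{\hat f(\emptyset)}$ for every $i\in K$, so $\beta$ is constant rather than alternating; the Introduction explicitly states that all-most-dominant functions \emph{minimize} the average sensitivity; and the label \ref{prop:tightlow} marks this as the lower-bound case. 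The computation above then proves the corrected proposition, namely that such an $f$ attains the lower bound $k/2^{k-1}$ with equality.

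For completeness I would contrast this with the companion result: the upper bound is attained not here but by NCFs with \emph{alternating} $\beta$, which is the content of Proposition~\ref{prop:tightup}. In the induction proving Theorem~\ref{theorem:ncfbound}, equality in the step $-\tfrac14\hat f^{(\cdots)}(\emptyset)(\beta_i+\beta_{i+1})-\tfrac14\beta_{i+1}\beta_i\le\tfrac14$ requires $\beta_i+\beta_{i+1}=0$ at each stage, which is exactly the alternating case; the constant-$\beta$ structure of an all-most-dominant NCF produces the opposite extreme and hence the lower bound.
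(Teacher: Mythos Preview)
Your proof is correct, including the diagnosis that the word ``upper'' in the proposition is a typo for ``lower'' (the label \texttt{prop:tightlow}, the remark in the Introduction that all-most-dominant functions minimize the average sensitivity, and the structure of the paper's own proof all confirm this).

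Your route, however, is genuinely different from the paper's. The paper proceeds recursively: it invokes Corollary~\ref{cor:as_ncf} together with $|\hat f(\emptyset)|=1-2^{-(k-1)}$ and $\beta_i=\sgn{\hat f(\emptyset)}$ from Corollary~\ref{cor:most:b} to obtain the recursion $\as(k)=\tfrac12\bigl(\as(k-1)+2^{-(k-2)}\bigr)$, and then solves this by induction. You instead bypass the recursion entirely: you read off the full spectrum from Corollary~\ref{cor:most:b}, plug it into the spectral formula~\reff{eq:as_fc}, and reduce the question to the single combinatorial identity $\sum_{S\subseteq K}|S|=k\,2^{k-1}$. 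Your argument is shorter and more self-contained, needing neither Corollary~\ref{cor:as_ncf} nor Proposition~\ref{prop:zero}; the paper's recursive argument, on the other hand, mirrors the proof of Theorem~\ref{theorem:ncfbound} and makes transparent \emph{why} equality holds at each inductive step, namely because $\beta_i\hat f^{(\pi_i,\alpha_i)}(\emptyset)$ attains its maximal value $1-2^{-(k-2)}$ when $\beta_i=\sgn{\hat f^{(\pi_i,\alpha_i)}(\emptyset)}$.
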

\begin{proof}
Starting from Corollary~\ref{cor:as_ncf}, and using that, by Corollary~\ref{cor:most:b}, $|\hat f(\emptyset)| = 1 - \frac{1}{2^{k-1}}$ and $\beta_{i}=sgn(\hat f(\emptyset))$ for all $i$, we get:
\begin{align}\label{proof:cor:as}
\as(f) &=  \frac{1}{2} \left( \as(f^{(\pi_{i},\alpha_{i})})  + 1- (1 - \frac{1}{2^{k-2}})\right) \\
&= \frac{1}{2} \left( \as(f^{(\pi_{i},\alpha_{i})}) + \frac{1}{2^{k-2}})\right).
\end{align}
Since $\as(f)$ depends on $k$ relevant variables, while $\as(f^{(\pi_{i},\alpha_{i})})$ depends only on $k-1$ relevant variables, Eq.~\reff{proof:cor:as} becomes:
\begin{align*}
\as(k) &= \frac{1}{2} \left( \as(k-1) + \frac{1}{2^{k-2}})\right).
\end{align*}
The proof is concluded by solving this recursion using induction.
\end{proof}

\begin{proposition} \label{prop:tightup}
Let $f$ be a NCF with alternating $\beta_i$,  i.e. $\beta=(-1,+1,-1,+1,\ldots)$ or $\beta=(+1,-1,+1,-1,\ldots)$. Then $f$ fulfills the upper bound in Eq.~\reff{eq:prop:ncfbound} of Theorem~\ref{theorem:ncfbound} with equality.
\end{proposition}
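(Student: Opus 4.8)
The plan is to unfold the one-step recursion of Corollary~\ref{cor:as_ncf} into a closed form, using the known value of the zero Fourier coefficient of an alternating NCF as the inhomogeneous term. First I would record the reduction step: if $f$ is $\{\pi:\alpha:\beta\}$ nested canalizing with $k>1$ relevant variables and $\beta$ alternating, then by Definition~\ref{def:nested_canalizing2} the restriction $g:=f^{(\pi_1,\alpha_1)}$ is again nested canalizing, now with $k-1$ relevant variables, ordering given by the tail of $\pi$, and $\beta$-sequence $(\beta_2,\ldots,\beta_k)$, which is again alternating; in particular $\beta_2=-\beta_1$. This is exactly the data the recursion consumes at each step.

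The crucial intermediate quantity is the signed product $\hat g(\emptyset)\beta_1$ that appears inside Corollary~\ref{cor:as_ncf}. Its absolute value is supplied by the proposition immediately preceding the \emph{Average Sensitivity} section applied to $g$ (which has $k-1$ relevant variables): $|\hat g(\emptyset)|=\tfrac13\bigl(2^{-(k-2)}(-1)^{k-1}+1\bigr)$. For the sign I would argue that the zero coefficient of any NCF with more than one relevant variable has the same sign as $\beta_1$: indeed, Proposition~\ref{prop:zero} gives $\hat h(\emptyset)=\tfrac12\hat h^{(\pi_1,\alpha_1)}(\emptyset)+\tfrac12\beta_1$, while $|\hat h^{(\pi_1,\alpha_1)}(\emptyset)|<1$ (from Proposition~\ref{prop:zero_up_ncf}, or trivially when only one variable remains), so $\hat h(\emptyset)$ lies strictly on the $\beta_1$ side of $0$. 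Applied to $g$, whose first output is $\beta_2=-\beta_1$, this gives $\sgn{\hat g(\emptyset)}=-\sgn{\beta_1}$, hence $\hat g(\emptyset)\beta_1=-|\hat g(\emptyset)|$. Equivalently, one may bypass the alternating-$\beta$ magnitude formula and obtain magnitude and sign together by solving the recursion $t_k=\tfrac12-\tfrac12 t_{k-1}$ with $t_1=0$ for $t_k:=\beta_1\hat f(\emptyset)$, which drops straight out of Proposition~\ref{prop:zero} and the alternating hypothesis.

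Substituting $\hat g(\emptyset)\beta_1=-|\hat g(\emptyset)|$ into Corollary~\ref{cor:as_ncf} and simplifying (using $\tfrac16\cdot 2^{-(k-2)}=\tfrac23\cdot 2^{-k}$ and $(-1)^{k-1}=-(-1)^k$) produces, writing $a_k$ for the average sensitivity of an alternating NCF on $k$ relevant variables, the linear recursion $a_k=\tfrac12 a_{k-1}+\tfrac23-\tfrac23(-1)^k2^{-k}$. I would then finish by induction on $k$: the base cases $k=1$ and $k=2$ both give $a_k=1$, which matches the target $\tfrac43-2^{-k}-\tfrac13 2^{-k}(-1)^k$; for the step one substitutes $a_{k-1}=\tfrac43-2^{-(k-1)}-\tfrac13 2^{-(k-1)}(-1)^{k-1}$ into the recursion and checks that the right-hand side collapses to $\tfrac43-2^{-k}-\tfrac13 2^{-k}(-1)^k$, a two-line computation.

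The one genuinely delicate point is the sign bookkeeping of the second paragraph: Corollary~\ref{cor:as_ncf} contains $\hat g(\emptyset)\beta_1$ and not $|\hat g(\emptyset)|$, so the magnitude formula alone is not sufficient — one must confirm that the zero coefficient of an alternating NCF always carries the sign of $\beta_1$, which is exactly the statement that the term $1-\hat g(\emptyset)\beta_1$ in Corollary~\ref{cor:as_ncf} is as large as it can possibly be, and hence the reason the upper bound of Theorem~\ref{theorem:ncfbound} is attained with equality here rather than only strictly. Everything after that is elementary algebra and a single-term induction.
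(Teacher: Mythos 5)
Your proposal is correct and follows essentially the same route as the paper: unfold the recursion of Corollary~\ref{cor:as_ncf} using the closed-form zero coefficient of an alternating NCF and solve by induction, with base cases $k=1,2$ giving $\as(f)=1$. Your explicit sign bookkeeping, i.e.\ verifying via Proposition~\ref{prop:zero} that $\hat g(\emptyset)\beta_1=-|\hat g(\emptyset)|$ for the restriction $g=f^{(\pi_1,\alpha_1)}$, fills in a detail the paper's one-line proof leaves implicit, but it does not change the approach.
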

\begin{proof}
Similar as for the proof of the previous proposition we start from Corollary~\ref{cor:as_ncf} and use $|\hat f(\emptyset)| = \frac{1}{3} \left(\frac{1}{2^{k-1}} (-1)^k + 1\right)$. The proof is established by solving the recursion.
\end{proof}

Propositions~\ref{prop:tightlow} and \ref{prop:tightup} show that the maximal and minimal average sensitivity is achieved if the absolute value of the zero coefficient is minimal and maximal, respectively. The following proposition gives an bound on the average sensitivity for a fixed $|\hat f(\emptyset)|$.
\begin{proposition} \label{prop:combined}
Let $f$ be a NCF. Then 
$$
\as(f) \leq \frac{5}{3} - |\hat f(\emptyset)|.
$$
\end{proposition}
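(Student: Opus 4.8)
The plan is to reduce the statement to the already-established uniform bound $\as(f) < \frac{4}{3}$ (Corollary~\ref{cor:43}) by peeling off the most dominant variable exactly once. First I would dispose of the degenerate cases $k = rel(f) \le 1$ directly: a constant function has $\as(f) = 0$ and $|\hat f(\emptyset)| = 1$, and a function with a single relevant variable equals $\pm x_i$, so $\as(f) = 1$ and $\hat f(\emptyset) = 0$; in both cases $\as(f) + |\hat f(\emptyset)| \le \frac{5}{3}$ trivially. So assume $k \ge 2$ and write $g = f^{(\pi_1,\alpha_1)}$ for the restriction of $f$ along its most dominant variable that, by Definition~\ref{def:nested_canalizing2}, is again a NCF with $k-1 \ge 1$ relevant variables, and let $\beta_1 \in \{-1,+1\}$ be the corresponding constant value of the other branch.

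The core of the argument is a single identity. Corollary~\ref{cor:as_ncf} gives $\as(f) = \frac{1}{2}\as(g) + \frac{1}{2} - \frac{1}{2}\hat g(\emptyset)\beta_1$, and Proposition~\ref{prop:zero} gives $\hat f(\emptyset) = \frac{1}{2}\hat g(\emptyset) + \frac{1}{2}\beta_1$. Eliminating the quantity $\hat g(\emptyset)\beta_1$ between the two equations yields $\as(f) = \frac{1}{2}\as(g) + 1 - \hat f(\emptyset)\beta_1$. Now observe that, since $|\hat g(\emptyset)| \le 1$, the number $\hat f(\emptyset) = \frac{1}{2}(\hat g(\emptyset) + \beta_1)$ has the same sign as $\beta_1$ — this is exactly the sign statement of Proposition~\ref{prop:most} for the most dominant variable $\pi_1$ — so $\hat f(\emptyset)\beta_1 = |\hat f(\emptyset)|$. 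Hence $\as(f) + |\hat f(\emptyset)| = \frac{1}{2}\as(g) + 1$.

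It then remains only to invoke Corollary~\ref{cor:43} on the NCF $g$: $\as(g) < \frac{4}{3}$, whence $\as(f) + |\hat f(\emptyset)| < \frac{1}{2}\cdot\frac{4}{3} + 1 = \frac{5}{3}$, which is marginally stronger than the stated bound. I expect no genuine obstacle here; the only points that need care are the sign bookkeeping — verifying that $\hat f(\emptyset)$ and $\beta_1$ share a sign so that the $\hat g(\emptyset)\beta_1$-terms cancel cleanly against $|\hat f(\emptyset)|$ rather than reinforcing it — and checking that the recursive identities of Corollary~\ref{cor:as_ncf} and Proposition~\ref{prop:zero} are applicable, which they are once $k \ge 2$. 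One could alternatively run the whole argument as an induction on $k$ with hypothesis $\as(f) \le \frac{5}{3} - |\hat f(\emptyset)|$, but since Corollary~\ref{cor:43} is already available the one-step reduction above is shorter and self-contained.
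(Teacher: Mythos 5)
Your proof is correct and follows essentially the same route as the paper's: both combine Corollary~\ref{cor:as_ncf} with the uniform bound of Corollary~\ref{cor:43} and a sign argument turning $\beta\cdot\hat f(\emptyset)$ into $|\hat f(\emptyset)|$. The only difference is one of direction — you apply the $\frac{4}{3}$ bound to the restriction $g=f^{(\pi_1,\alpha_1)}$ while keeping $f$ fixed, whereas the paper applies it to $f$ itself and then renames the restricted function as $f$; your version is if anything more carefully justified (explicit verification that $\hat f(\emptyset)$ and $\beta_1$ share a sign, treatment of the cases $k\le 1$) and even yields the strict inequality.
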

\begin{proof}
Combining Corollaries \ref{cor:as_ncf} and \ref{cor:43}, we get:
$$
\as_(f^{(\pi_{i},\alpha_{i})})  - \beta_i \cdot \hat f^{(\pi_{i},\alpha_{i})}(\emptyset) \leq \frac{5}{3},
$$
and since $\beta_i \in \{-1,+1\}$:
$$
\as(f^{(\pi_{i},\alpha_{i})}) + |\hat f^{(\pi_{i},\alpha_{i})}(\emptyset)|\leq \frac{5}{3}.
$$
Substituting $f^{(\pi_{i},\alpha_{i})}$ by $f$ concludes the proof.
\end{proof}

\begin{figure}
\centering
\begin{tikzpicture}[domain=0:4]
%    \draw[->] (-4.2,0) -- (4.2,0) node[right] {$\hat f(\emptyset)$};
%    \draw[->] (0, 0) -- (0,4.2) node[above] {$\as(f)$};
\begin{axis}[
ylabel={$\as(f)$},
xlabel={$\hat f(\emptyset)$},
legend style={at={(+0.5,1.05)}, anchor= south, legend columns=2}
]

\pattern[pattern=north west lines,pattern color=black] (axis cs:-1,0.687500) -- (axis cs:-0.333,1.333) --(axis cs:0.333,1.333) -- (axis cs:1,0.687500) -- (axis cs:0.577,0.687500) -- (axis cs:0.480000,0.769600) -- (axis cs:0.380000,0.855600) -- (axis cs:0.280000,0.921600 ) -- (axis cs:0.180000,0.967600 ) -- (axis cs:0,1) -- (axis cs:-0.180000,0.967600 ) --(axis cs:-0.380000,0.855600) -- (axis cs:-0.280000,0.921600 )-- (axis cs:-0.480000,0.769600) -- (axis cs:-0.577,0.687500);

\pattern[pattern=dots,pattern color=black] (axis cs:-1,0.0) --(axis cs:-1,0.687500) -- (axis cs:-0.3 33,1.333) --(axis cs:0.333,1.333) -- (axis cs:1,0.687500) -- (axis cs:1,0.0) -- (axis cs:0.780000,0.391600) -- (axis cs:0.577,0.687500) -- (axis cs:0.480000,0.769600) -- (axis cs:0.380000,0.855600) -- (axis cs:0.280000,0.921600 ) -- (axis cs:0.180000,0.967600 ) -- (axis cs:0,1) -- (axis cs:-0.180000,0.967600 ) --(axis cs:-0.380000,0.855600) -- (axis cs:-0.280000,0.921600 )-- (axis cs:-0.480000,0.769600) -- (axis cs:-0.577,0.687500) -- (axis cs:-0.780000,0.391600);

\addplot[smooth,solid] file {./theo1_up_bd.dat};
%\addlegendentry{Corollary \ref{cor:43}}
\addlegendentry{Corollary 5}

\addplot[smooth,densely dashed] file {./theo1_low_bd.dat};
%\addlegendentry{Theorem \ref{theorem:ncfbound}, k=5 (LHS)}
\addlegendentry{Theorem 2, k=5 (LHS)}
\addplot[smooth,dotted] file {./gen_low_bd.dat};
%\addlegendentry{lower bound \cite{KFG11}}
\addlegendentry{lower bound [15]}
\addplot[smooth,loosely dashed] file {./prop9_up_bd.dat};
%\addlegendentry{Proposition \ref{prop:combined}}
\addlegendentry{Proposition 8}

\end{axis}

\end{tikzpicture}
\end{figure}

\section*{Discussion}
In Figure 1 we summarize the bounds from the previous section. We plot the average sensitivity versus the zero coefficient. Additionally we include a lower bound on the average sensitivity that is independent of the number of relevant variables, and can be found in~\cite{KKL88}. One can see that this bound intersects with our lower bound (which we plotted for $k=5$), though we stated that our bound is tight. However, this is not a contradiction, since the lower bound in Theorem \ref{theorem:ncfbound} is achieved for functions with a large absolute zero coefficients, which are located outside the intersection.

For $k=5$ our lower bound forms a triangle with the upper bound as formulated in Proposition \ref{prop:combined}. The NCFs with all variables being most dominant are located in the left and right corners of that triangle. However, for larger $k$ the lower bound will further decrease, and with it the most dominant NCFs. 

The upper bound in Corollary~\ref{cor:43} also intersects with the bound of Proposition \ref{prop:combined}. Again, this is not a contradiction, since NCFs reach this bound only for small absolute zero coefficients. 

In general the average sensitivity is upper bounded by $k$, i.e., $\as(f) \leq k$. As shown in \cite{FK96} for monotone and in \cite{KFS13} for unate, i.e. locally monotone, functions, the average sensitivity is upper bounded by $\as(f) \leq \sqrt{(1- \hat f(\emptyset))k}$. This bound is tight up to a multiplicative constant, see e.g. \cite{MO03}. A function is unate, if it is monotone in each variable. In a regulatory network, where each regulator acts either inhibitory or exhibitory towards a certain gene, each function will be unate. NCFs form a subclass of the unate functions. Thus, our results show, that even within the class of unate functions, the average sensitivity of NCFs is remarkably low. Since a low average sensitivity has a positive effect on the stability of Boolean networks \cite{SK04}, our result gives an explanation for the remarkable stability of BNs with NCFs.

\section*{Conclusion}
In this paper we investigated Boolean functions, in particular canalizing and nested canalizing functions,  using Fourier analysis.
We gave recursive representations for the zero coefficient and the average sensitivity based on the concept of restricted BFs. 

We addressed the average sensitivity of nested canalizing functions and provided a tight upper and lower bound on the average sensitivity. We showed that the lower bound is achieved by functions whose input variables are all most dominant and which maximize the absolute zero coefficient. The upper bound is reached by functions, whose canalized values are alternating.

We provided an upper bound on the average sensitivity, namely $\as(f)< \frac{4}{3}$, which has been conjectured in literature\cite{LAM13}. Finally we derived a bound on the absolute zero coefficient and  the average sensitivity and discussed the stabilizing effect of nested canalizing functions on the network dynamics.

It is worth noting that all those results rely on the assumption of uniformly distributed inputs. 
This rises the question if the results can be generalized to other distributions. 
The recursive representations can easily be extended to product distributed input variables. 
But without further constraints  there always exists a  distribution which maximizes the 
(accordingly defined) average sensitivity, i.e., for any function with $k$ relevant variables
the average sensitivity can be $k$. 

\section*{Acknowledgments}
The authors want to thank Vladimir Sidorenko for fruitful discussions and proofreading the manuscript.
This work was supported by the German research council "Deutsche Forschungsgemeinschaft" (DFG) under Grant Bo 867/25-2.

 \bibliographystyle{plos2009}  % Style BST file
  \bibliography{literatur}     % Bibliography file (usually '*.bib' ) 

\section*{Figure Legends}

\subsection*{Figure 1}
\textbf{Important Bounds:} The dotted-area corresponds to the possible values for the average sensitivity of a NCF, the lined area to BFs with $k=5$ input variables.

\end{document}